\documentclass[11pt,a4paper]{article}

\usepackage[utf8]{inputenc} 

\usepackage[linesnumbered,ruled,vlined]{algorithm2e} \DontPrintSemicolon 
\usepackage[noend]{algpseudocode}
\usepackage{amsfonts} 
\usepackage{amsmath} 
\usepackage{amsthm} 
\usepackage[english]{babel} 
\usepackage{color} 
\usepackage{comment} 
\usepackage[bottom=1in,left=1in,right=1in,top=1in]{geometry} 
\usepackage{ifthen} 
\usepackage{mathptmx} \usepackage[T1]{fontenc} 
\usepackage{textcase} 
\usepackage{url} 
\usepackage{epsfig,color}
\usepackage{subfig}


\newcommand{\arrayfromto}
  [3] 
  {\ensuremath{#1[#2:#3]}}


\newcommand{\algoname} 
  [1] 
  {\textsc{#1}}

\newcommand{\algoptposneg}
  {\algoname{In\-ser\-tion\-Of\-Neg\-a\-tive}}

\newcommand{\approxsorting}
  {\algoname{Ap\-prox\-Sort\-ing}}

\newcommand{\paramsorting}
  {\algoname{Pa\-ram\-e\-trized\-Sort\-ing}}

\newcommand{\arrayindex}
  [2] 
  {\ensuremath{#1[#2]}}

\newcommand{\assignment} 
  [3] 
  [semicolon]
  {#2 \ensuremath{\leftarrow} #3\ifthenelse{\equal{#1}{comma}}
                                           {,\ }
                                           {\ifthenelse{\equal{#1}{nosemicolon}}{}{ \;}}}

\newcommand{\beginofinterv} 
  [1] 
  {\ensuremath{\mathit{i_{#1}}}}

\newcommand{\booleanfalse} 
  {\mbox{\KwSty{false}}}

\newcommand{\booleantrue} 
  {\mbox{\KwSty{true}}}

\newcommand{\booleantype} 
  {\mbox{\KwSty{bool}}}

\newcommand{\breakkw} 
  {\KwSty{break} \;}

\newcounter{casocont}
\newcommand{\caso} 
  [3] 
  [nada]
  {\ifthenelse{\value{casocont}=0}
              {}
              {\\}%
   {#2},\hspace*{1ex}%
   \ifthenelse{\value{chavedecasosalin} = 0}
              {}
              {&}%
   \ifthenelse{\equal{#1}{nada}}
              {\emmodopar{#3}}
              {\emmodopar[#1]{#3}}%
   \addtocounter{casocont}{1}}

\newcounter{chavedecasosalin}
  [1]
  {\left\{\ifthenelse{\equal{#1}{0}}
                     {\begin{array}{@{}l}\setcounter{chavedecasosalin}{0}}
                     {\begin{array}{@{}l@{}l}\setcounter{chavedecasosalin}{1}}%
   \setcounter{casocont}{0}}
  {\end{array}\right.}

\newlength{\emmodoparlargura}
\newcommand{\emmodopar} 
  [2] 
  [\emmodoparlargura]
  {\settowidth{\emmodoparlargura}{{#2}}\begin{minipage}[t]{#1}{#2}\end{minipage}}

\newcommand{\curveaccsum} 
  [1] 
  {\ensuremath{M_{#1}}}

\newcommand{\curveaccsumof} 
  [2] 
  {\ensuremath{\curveaccsum{#1}(#2)}}

\newcommand{\delimit}
  [4] 
  {\ensuremath{\ifthenelse{\equal{#1}{*}}{\left}{}#2
               #4
               \ifthenelse{\equal{#1}{*}}{\right}{}#3}}

\newcommand{\Endofinterv} 
  [1] 
  {\ensuremath{\mathit{j_{#1}}}}

\newcommand{\fdcd} 
  [3] 
  {\ensuremath{#1\colon #2 \to #3}}

\newcommand{\fdcddef} 
  [5] 
  {\ensuremath{\fdcd{#1}{#2}{#3}\colon #4 \mapsto #5}}

\newcommand{\fieldname} 
  [1] 
  {\variablename{#1}}

\newcommand{\floor}
  [2] 
  [1]
  {\delimit{#1}{\lfloor}{\rfloor}{#2}}

\newcommand{\insertintoseq}
  [3] 
  {\seqaftinsin{#2}{#3}}

\newcommand{\interv} 
  [1] 
  {\ensuremath{I_{#1}}}

\newcommand{\newpos}
  [1] 
  {\ensuremath{\mathit{new}_{#1}}}

\newcommand{\newposof}
  [2] 
  {\ensuremath{\newpos{#1}(#2)}}

\newcommand{\objfield} 
  [2] 
  {#1\mbox{\textnormal{\textbf{.}}}\fieldname{#2}}

\newcommand{\objtype} 
  [2] 
  {#1\,\mbox{\textnormal{\textbf{:}}}\,\typename{#2}}

\newcommand{\oldpos}
  [1] 
  {\ensuremath{\mathit{old}_{#1}}}

\newcommand{\oldposof}
  [2] 
  {\ensuremath{\oldpos{#1}(#2)}}

\newcommand{\Omegaof}
  [1] 
  {\ensuremath{\bigOmega(#1)}}
\newcommand{\bigOmega}{\ensuremath{\Omega}}

\newcommand{\Oof}
  [1] 
  {\ensuremath{\bigO(#1)}}
\newcommand{\bigO}{\ensuremath{O}}

\newcommand{\optposinterv} 
  [1] 
  {\ensuremath{\mathit{opt}(#1)}}

\newcommand{\peakininterv}
  [2] 
  {\ensuremath{f(#1,#2)}}

\newcommand{\peakof}
  [1] 
  {\ensuremath{\mathit{peak}(#1)}}

\newcommand{\insertionproblem}
  {\Problemname{ISS}}

\newcommand{\kpartitionproblem}
  {\Problemname{Multiprocessor Scheduling}}

\newcommand{\Problemname}
  [1] 
  {\textsc{#1}}

\newcommand{\restrictedsortingproblem}
  {\Problemname{SSS$(k,s)$}}

\newcommand{\sortingproblem}
  {\Problemname{SSS}}

\newcommand{\reminder}
  [1] 
  {\textcolor{red}{#1}}

\newcommand{\seq}
  [2] 
  [1]
  {\delimit{#1}{\langle}{\rangle}{#2}}

\newcommand{\seqaftinsin} 
  [2] 
  {\ensuremath{#1^{(#2)}}}

\newcommand{\seqfromto}
  [3] 
  {\ensuremath{\seq{#1_{#2}, \ldots, #1_{#3}}}}

\newcommand{\seqindex}
  [2] 
  {\ensuremath{#1_{#2}}}

\newcommand{\set}
  [2] 
  [1]
  {\delimit{#1}{\{}{\}}{#2}}

\newcommand{\sizeofseq}
  [2] 
  [1]
  {\protect{\sizeofset[1]{#2}}}

\newcommand{\sizeofset}
  [2] 
  [1]
  {\delimit{#1}{\lvert}{\rvert}{#2}}

\newcommand{\subseq}
  [3] 
  {\ensuremath{#1_{#2}^{#3}}}

\newcommand{\sumbest} 
  {\variablename{sum\_best}}

\newcommand{\sumcurr} 
  {\variablename{sum\_curr}}

\newcommand{\sumofseq}
  [1] 
  {\ensuremath{\mathit{score}(#1)}}

\newcommand{\sumofset}
  [1] 
  {\ensuremath{\mathit{score}(#1)}}

\newcommand{\tendsto} 
  {\ensuremath{\rightarrow}}

\newcommand{\prefixsum}
  [1] 
  {\ensuremath{\mathit{prefixsum}(#1)}}

\newcommand{\ternaryoperator} 
  [3] 
  {#1 \ \KwSty{?} \ #2 \ \KwSty{:} \ #3}


\newtheorem{lemma}{Lemma}

\newtheorem{theorem}{Theorem}
\newtheorem{fact}{Fact}
\newtheorem{observation}{Observation}

\newcommand{\Thetaof}
  [1] 
  {\ensuremath{\bigTheta(#1)}}
\newcommand{\bigTheta}{\ensuremath{\Theta}}

\newcommand{\typename} 
  [1] 
  {\variablename{#1}}

\newcommand{\valueofseq}
  [1] 
  {\ensuremath{score^*(#1)}}

\newcommand{\variablename} 
  [1] 
  {\ensuremath{\text{\textnormal{#1}}}}

\begin{document}
\title{Insertion and Sorting in a Sequence of Numbers Minimizing the Maximum Sum
of a Contiguous Subsequence\thanks{This work is partially supported by
FUNCAP/INRIA (Ceará State, Brazil/France) and CNPq (Brazil) research projects. A slightly different version of this paper has been 
submitted for journal publication.}}

\author
{%
Ricardo C. Corrêa, Pablo M. S. Farias\thanks{Partially supported by a doctoral
scholarship of CAPES (Programa de Demanda Social).} \\
{\small ParGO Research Group}\thanks{\texttt{http://www.lia.ufc.br/\~{}pargo}}
\newcounter{pargo}\setcounter{pargo}{\value{footnote}}
\\
{\small Universidade Federal do Ceará} \\
{\small Campus do Pici, Bloco 910} \\
{\small 60440-554 Fortaleza, CE, Brazil} \\
{\small \texttt{\{correa,pmsf\}@lia.ufc.br}} \\
   \and
Críston P. de Souza%
  \thanks{Partially supported by a FUNCAP grant.}
\\
\setcounter{footnote}{\value{pargo}}
{\small ParGO Research Group}$^\fnsymbol{pargo}$
\\
{\small Universidade Federal do Ceará} \\
{\small Campus de Quixadá} \\
{\small 63900-000 Quixadá, CE, Brazil}\\
{\small \texttt{cristonsouza@lia.ufc.br}}
}

\date{}

\maketitle

\begin{abstract}
Let $A$ be a sequence of $n \geq 0$ real numbers. A subsequence of $A$ is a sequence of contiguous elements of $A$. A 
\emph{maximum scoring subsequence} of $A$ is a subsequence with largest sum of its elements
, which can be found in $O(n)$ time by Kadane's dynamic programming algorithm. We consider in this paper two problems involving maximal scoring subsequences of a
sequence. Both of these problems arise in the context of buffer memory minimization in computer networks. 
The first one, which is called \Problemname{Insertion in a Sequence with Scores (\insertionproblem)}, consists in inserting a given real number $x$ in $A$ in such a way to minimize the sum of a maximum scoring subsequence of the resulting sequence, which can be easily done in $O(n^2)$ time by successively applying Kadane's algorithm to compute the maximum scoring subsequence of the resulting sequence corresponding to each possible insertion position for $x$. We show in this paper that the ISS problem can be solved in linear time and space with a more specialized algorithm. The second problem we consider in this paper is the \Problemname{Sorting a Sequence by Scores (\sortingproblem)} one, stated as follows: find a permutation $A'$ of $A$ that minimizes the sum of a maximum scoring subsequence. We show that the SSS problem is 
strongly NP-Hard and give a 2-approximation algorithm for it.
\end{abstract}

\section{Introduction}

Let the elements of a sequence $A$ of $n \geq 0$
real numbers be denoted by \seqindex{a}{1}, \seqindex{a}{2},
\ldots, \seqindex{a}{n}. Then, $A$ is the sequence \seq{\seqindex{a}{1},
\seqindex{a}{2}, \ldots, \seqindex{a}{n}} (which is \seq{} if $n=0$) and its
size is $\sizeofseq{A} = n$. A subsequence of $A$ defined by indices $0 \leq i
\leq j \leq n$ is denoted by \subseq{A}{i}{j}, which equals either \seq{}, if
$i = j$, or the sequence \seq{\seqindex{a}{i+1}, \ldots, \seqindex{a}{j}} of
contiguous elements of $A$, otherwise (see Figure~\ref{fig:sequence} for an
example).
Let $\sumofseq{\subseq{A}{i}{j}} = \sum_{k=i+1}^{j} \seqindex{a}{k}$ stand for the
sum of elements of $\subseq{A}{i}{j}$ (we consider
$\sumofseq{\seq{}}=0$). A \emph{maximum scoring subsequence} of $A$ is a
subsequence with largest score. The \Problemname{Maximum Scoring Subsequence
(MSS)} problem is that of finding a maximum scoring subsequence of a given
sequence $A$. The MSS problem can be solved in $O(n)$ time
by Kadane's dynamic programming algorithm, whose essence
is to consider $A$ as a concatenation $\seq{\subseq{A}{0}{j_1}, \subseq{A}{i_2 =
j_1}{j_2}, \ldots, \subseq{A}{i_\ell}{j_\ell}}$ of appropriate subsequences,
called {\em intervals}, and to determine $S_k$ as a maximum scoring subsequence
of $\subseq{A}{i_k}{j_k}$, for all $k \in \{ 1, 2, \ldots, \ell \}$. Defining
each interval $\subseq{A}{i_k}{j_k}$ -- with the possible exception of the last
one -- to be such that $\sumofseq{\subseq{A}{i_k}{j_k}} < 0$ and
$\sumofseq{\subseq{A}{i_k}{j'}} \geq 0$, for all $i_k \leq j' < j_k$, then the
largest score subsequence among $\set{S_1, S_2, \ldots, S_\ell}$ is a maximum scoring subsequence of $A$ \cite{Bentley.84,Gries.82}. The \emph{value} of $A$
is $\valueofseq{A} = \sumofseq{S}$, for any maximum scoring subsequence $S$ of
$A$.

\begin{figure}[htb]
\centerline {\input{sequence.pstex_t}}
\caption{An example of a sequence and a subsequence. A maximum scoring
subsequence is $A_{13}^{18}$ and $\valueofseq{A} = 12$.}
\label{fig:sequence}
\end{figure}

The MSS problem has several applications in practice, where maximum scoring
subsequences correspond to various structures of interest. For instance, in
Computational Biology, in the context of certain amino acid scoring schemes and
several other applications mentioned in \cite{Csuros.2004,Ruzzo.Tompa.99}.
In such a context, it may also be useful to find not only one but a maximal set
of non-overlapping maximum scoring subsequences of a given sequence $A$. This can
be formalized as the ALL MAXIMAL SCORING SUBSEQUENCES problem, for which have
been devised a linear sequential algorithm \cite{Ruzzo.Tompa.99}, a PRAM EREW
work-optimal algorithm that runs in \Oof{\log n} time and makes \Oof{n}
operations \cite{Dai.Su.2006} and a BSP/CGM parallel algorithm which uses $p$
processors and takes \Oof{\sizeofseq{A}/p} time and space per processor
\cite{Alves.Caceres.Song.06}. The MSS problem has also been generalized in the
direction of finding a list of $k$ (possibly overlapping) maximum scoring
subsequences of a given sequence $A$. This is known as the $k$ MAXIMUM SUMS
PROBLEM \cite{Bae.Takaoka.04} and for a generalization of it an optimal
\Oof{n+k} time and \Oof{k} space algorithm has been devised
\cite{Brodal.Jorgensen.2007,Brodal.Jorgensen.2008}. An optimal \Oof{n
\cdot \max \{1, \log (k/n)\}} algorithm has also been developed for the related
problem of selecting the subsequence with the $k$-th largest score
\cite{Brodal.Jorgensen.2008}. 

We consider in this paper two problems related to the MSS.
The first one, which is called \Problemname{Insertion in a Sequence with scores (\insertionproblem)},
consists in inserting a given real number $x$ in $A$ in such a way to minimize
the maximum score of a subsequence of the resulting sequence. The operation of
{\em inserting $x$ in $A$} is associated with an {\em insertion index} $p \in
\set{0, \ldots, n}$ and the {\em resulting sequence} $\insertintoseq{x}{A}{p} =
\seq{\subseq{A}{0}{p}, x, \subseq{A}{p}{n}}$, that is, the sequence obtained by the concatenation of
$\subseq{A}{0}{p}$, $x$, and $\subseq{A}{p}{n}$. The objective of the
\insertionproblem\ problem is to determine an insertion index $p^*$ that
minimizes $\valueofseq{\insertintoseq{x}{A}{p^*}}$, which can
be easily done in $O(n^2)$ time and $O(n)$ space by successively using Kadane's algorithm to compute the maximum scoring
subsequence of \insertintoseq{x}{A}{0}, \ldots, \insertintoseq{x}{A}{n}. We show
in this paper that we can do better. More precisely, we show that the ISS
problem can be solved in linear time.

The {\insertionproblem} problem can be approached more specifically
depending on the value of $x$. The case $x=0$ is trivial since
$\valueofseq{\insertintoseq{x}{A}{p}} = \valueofseq{A}$ independently of the
value of $p$, which means that any insertion index $p$ is optimal for $A$. If $x
< 0$, then $\sumofseq{\insertintoseq{x}{A}{p}} < \sumofseq{A}$, for all
insertion indices $p \in \set{0, 1, \ldots, n}$. Intuitively, then, $x$ has to
be inserted inside some maximum scoring subsequence $S = \subseq{A}{i}{j}$ of
$A$, in an attempt to reduce the value of \insertintoseq{x}{\linebreak[0]A}{\linebreak[0]p} with respect to that of $A$.
Even though the value of $\insertintoseq{x}{A}{p}$ cannot be smaller than
$\valueofseq{A}$ in certain cases (for instance, if $S$ has only one positive
element, or $\valueofseq{\subseq{A}{0}{i}} = \sumofseq{S}$, or
$\valueofseq{\subseq{A}{j}{n}} = \sumofseq{S}$, then all insertion indices are
equally good for $A$ since $\valueofseq{\insertintoseq{x}{A}{p}} =
\valueofseq{A}$ for any particular choice of $p$), we describe an $O(n)$
time and space algorithm to determine a best insertion position in a
maximum scoring subsequence of $A$, provided that $x$ is negative.

Showing that the ISS problem can be solved in linear time is a more complex
task when $x > 0$. Inserting $x$ inside a maximum scoring subsequence $S$
of $A$ will certainly lead to a subsequence $S'$ of \insertintoseq{x}{A}{p} such
that $\sumofseq{S'} > \sumofseq{S}$ (this may happen even if $x$ in inserted
outside $S$). Intuitively, therefore, we should choose an insertion position
where $x$ can only ``contribute'' to subsequences whose scores are as small as possible.
Computing the necessary information for this in \Oof{n} time may seem hard at
first, but we can make things simpler by considering the partition into
intervals of $A$ (the same used in Kadane's algorithm). The idea is to
determine the interval $\subseq{A}{i_k}{j_k}$ having an optimal insertion index. The difficulty to
accomplish this task in linear time stems from the fact that computing
$\valueofseq{\insertintoseq{x}{A}{p}}$ when $p$ is an insertion index in an
interval $\subseq{A}{i_k}{j_k}$ may involve one or more intervals other than
$\subseq{A}{i_k}{j_k}$. We overcome this difficulty by means of a dynamic
programming approach.

The second problem we consider in this paper is the \Problemname{Sorting a
 sequence by scores (\sortingproblem)}, stated as follows: given the sequence $A$, find a
 permutation $A'$ of $A$ that minimizes \valueofseq{A'}. The \sortingproblem\ problem is referred to as
the \Problemname{Sequencing to Minimize the Maximum Renewal Cumulative Cost} in~\cite{Tsai.92}. Among other applications, this latter problem models buffer memory
usage in a node of a computer network. In this case, the absolute value of a
number models the local memory space required to store a corresponding message after its
reception and before it is resent through the network (in practice, there are additional cases in which the message is produced or consumed 
locally; these situations are ignored in this high level description for the sake of simplicity of exposition). 
This behavior can be seen as the execution of tasks (sending or receiving messages), each of which is associated with a (positive or negative)
cost that corresponds to the additional units of resources (local memory space) that are occupied after its execution.
Receiving a message results in a positive cost, while sending a message can be viewed as effecting a negative cost.
In this context, finding maximum scoring subsequences of sequences defining
communications between the nodes of a network corresponds to finding the
greatest buffer usage in each node \cite{Vieira.Rezende.Barbosa.Fdida.2012}.
Moreover, when the intention is to find an ordering for these communications
with the aim of minimizing the resulting memory usage, then we are left with the problem of sorting the communications so as to
minimize the maximum renewal cumulative cost.

It is mentioned in~\cite{Tsai.92} that the \sortingproblem\ problem has been proved to be strongly NP-hard by means of a
transformation from the \Problemname{3-Partition} problem. Indeed, a straightforward reduction from \Problemname{3-Partition} yields that the \sortingproblem\ problem remains 
NP-hard in the strong sense even if all negative elements in $A$ are equal to a value $-s$ and every positive 
element is in a certain range depending on $s$ (more details are given in Section~\ref{sec:sort}).
It is known that the \sortingproblem\ problem becomes polynomially solvable if the negative elements are $-s$ and the positive elements
are all equal to some value $s'$~\cite{Tsai.92}. In this paper, we devise a $(1 + M/\valueofseq{A})$-approximation 
algorithm for the \sortingproblem\ problem, where $M$ is the maximum element in $A$, 
which runs in \Oof{n \log n} time. For the general case of the \sortingproblem\ problem, since $\valueofseq{A} \geq M$, this algorithm has approximation 
factor of 2, and we show that this factor is tight.
However, for the aforementioned more particular case where the elements of $A$ are bounded, the approximation factor of this same 
algorithm becomes $3/2$, for $n \geq 3$.

We organize the remainder of the text as follows. Section~\ref{sec:prelim} states some
useful properties of maximum score subsequences for later use. In
Section~\ref{section-negative-case} and Section~\ref{section-positive-case} we then present our
solutions to the ISS problem for the cases where the inserted number $x$ is
negative and positive, respectively. Section~\ref{sec:sort} contains our results on
the SSS problem, and Section~\ref{sec:conc} finally provides conclusions and
directions for further investigations.

\section{Preliminaries on the ISS problem}
\label{sec:prelim}

Let us establish some simple and useful properties of sequence $A$ and a
subsequence $\subseq{A}{i}{j}$, for $0 \leq i \leq j \leq n$. We start with
three properties that give a view of minimal (with respect to inclusion) maximum
scoring subsequences. Let a {\em prefix} ({\em suffix}) of $A_i^j$ be a
subsequence $\subseq{A}{i}{j'}$ ($\subseq{A}{i'}{j}$), with $i \leq j' \leq j$ ($i \leq i' \leq j$).

\begin{fact}
If $A_i^j$ is a maximum scoring subsequence of $A$, then its prefixes and
suffixes have all nonnegative scores, otherwise a larger scoring subsequence can be
obtained by deleting a prefix or a suffix of negative score. Conversely,
$\sumofseq{X} \leq 0$, where $X$ is any suffix of $\subseq{A}{0}{i}$ or prefix
of $\subseq{A}{j}{n}$, otherwise a larger scoring subsequence
can be obtained by
concatenating $A_i^j$ with a suffix of $\subseq{A}{0}{i}$ or prefix of
$\subseq{A}{j}{n}$ of positive score.
\label{fact:nonnegprefsuf}
\end{fact}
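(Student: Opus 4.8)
The plan is to prove both directions of Fact~\ref{fact:nonnegprefsuf} by a simple exchange argument, relying on the maximality of $A_i^j$ as a scoring subsequence of $A$. Throughout I will use the additivity of the score function on concatenations of contiguous subsequences, namely that if $X$ and $Y$ are contiguous and $XY$ denotes their concatenation then $\sumofseq{XY} = \sumofseq{X} + \sumofseq{Y}$, which follows directly from the definition $\sumofseq{\subseq{A}{i}{j}} = \sum_{k=i+1}^{j} \seqindex{a}{k}$.

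First I would handle the prefix/suffix claim. Suppose for contradiction that $A_i^j$ is a maximum scoring subsequence but some prefix $\subseq{A}{i}{j'}$ has $\sumofseq{\subseq{A}{i}{j'}} < 0$. Writing $A_i^j$ as the concatenation of $\subseq{A}{i}{j'}$ and $\subseq{A}{j'}{j}$ and using additivity gives $\sumofseq{\subseq{A}{j'}{j}} = \sumofseq{A_i^j} - \sumofseq{\subseq{A}{i}{j'}} > \sumofseq{A_i^j}$. Thus $\subseq{A}{j'}{j}$ is a subsequence of $A$ with strictly larger score, contradicting the maximality of $A_i^j$. The argument for a suffix $\subseq{A}{i'}{j}$ of negative score is symmetric: deleting it leaves the prefix $\subseq{A}{i}{i'}$ with strictly larger score.

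Next I would handle the converse. Let $X$ be a suffix of $\subseq{A}{0}{i}$, say $X = \subseq{A}{i'}{i}$ with $0 \leq i' \leq i$, and suppose $\sumofseq{X} > 0$. Then the concatenation $X A_i^j = \subseq{A}{i'}{j}$ is a contiguous subsequence of $A$ with $\sumofseq{\subseq{A}{i'}{j}} = \sumofseq{X} + \sumofseq{A_i^j} > \sumofseq{A_i^j}$, again contradicting maximality; hence $\sumofseq{X} \leq 0$. The case of a prefix of $\subseq{A}{j}{n}$ is symmetric, concatenating on the right instead of the left.

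The reasoning here is entirely routine, so there is no real obstacle; the only point requiring a little care is bookkeeping of the index conventions, since $\subseq{A}{i}{j}$ denotes $\seq{\seqindex{a}{i+1}, \ldots, \seqindex{a}{j}}$ and empty subsequences have score $0$. In particular I would note that the boundary cases $j' = j$ and $i' = i$ (trivial prefix/suffix equal to the whole subsequence or empty) and the degenerate empty suffix/prefix with score $0$ are consistent with both statements, so the nonnegativity and nonpositivity conclusions hold without exception. This confirms that minimal maximum scoring subsequences have nonnegative prefix and suffix scores while being flanked by nonpositive suffix/prefix scores, as claimed.
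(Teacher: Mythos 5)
Your proposal is correct and follows exactly the argument the paper itself embeds in the statement of Fact~\ref{fact:nonnegprefsuf}: deleting a negative-score prefix or suffix, or concatenating a positive-score flanking suffix/prefix, would yield a strictly larger score by additivity, contradicting maximality. The only difference is that you make the additivity bookkeeping and boundary cases explicit, which the paper leaves implicit.
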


\begin{fact}
If $\subseq{A}{i}{j}$ is a maximum scoring subsequence of
$A$, then there is a maximum scoring subsequence of $\subseq{A}{i}{j}$ which is
a prefix (suffix) of $\subseq{A}{i}{j}$.
\label{fact:prefmaxpref}
\end{fact}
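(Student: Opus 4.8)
The plan is to start from any maximum scoring subsequence of \subseq{A}{i}{j} and ``grow it to the left'' into a prefix of the same score, exploiting the hypothesis that \subseq{A}{i}{j} is a maximum scoring subsequence of the \emph{whole} sequence $A$. Since \subseq{A}{i}{j} has only finitely many subsequences, let \subseq{A}{i'}{j'}, with $i \leq i' \leq j' \leq j$, be one of largest score among them (the empty subsequence guarantees this set is nonempty). I would then compare it with the prefix \subseq{A}{i}{j'} of \subseq{A}{i}{j} that shares its right endpoint $j'$.

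The key observation is the decomposition $\subseq{A}{i}{j'} = \seq{\subseq{A}{i}{i'}, \subseq{A}{i'}{j'}}$, which yields $\sumofseq{\subseq{A}{i}{j'}} = \sumofseq{\subseq{A}{i}{i'}} + \sumofseq{\subseq{A}{i'}{j'}}$. The crucial point — and the only place where the hypothesis that \subseq{A}{i}{j} is a maximum scoring subsequence of $A$ is actually used — is that \subseq{A}{i}{i'} is itself a \emph{prefix} of \subseq{A}{i}{j}. By Fact~\ref{fact:nonnegprefsuf}, every prefix of a maximum scoring subsequence of $A$ has nonnegative score, so $\sumofseq{\subseq{A}{i}{i'}} \geq 0$. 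Combined with the identity above, this gives $\sumofseq{\subseq{A}{i}{j'}} \geq \sumofseq{\subseq{A}{i'}{j'}}$.

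To conclude the prefix case, I would invoke the maximality of \subseq{A}{i'}{j'}: it has the largest score among all subsequences of \subseq{A}{i}{j}, and \subseq{A}{i}{j'} is another such subsequence, so $\sumofseq{\subseq{A}{i}{j'}} \leq \sumofseq{\subseq{A}{i'}{j'}}$. The two inequalities force equality, so the prefix \subseq{A}{i}{j'} is itself a maximum scoring subsequence of \subseq{A}{i}{j}, which is exactly the claim. The suffix statement follows by the mirror-image argument: grow \subseq{A}{i'}{j'} to the right into the suffix \subseq{A}{i'}{j}, note that the appended block \subseq{A}{j'}{j} is a suffix of \subseq{A}{i}{j} and hence has nonnegative score by Fact~\ref{fact:nonnegprefsuf}, and again use maximality to obtain equality of scores.

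As for difficulty, there is no real obstacle once Fact~\ref{fact:nonnegprefsuf} is in hand; the argument is a two-line extremal exchange. The only subtlety worth stating carefully is the distinction between the two notions of ``maximum scoring subsequence'' in play here — one relative to $A$ and one relative to \subseq{A}{i}{j} — together with the realization that the discarded left part \subseq{A}{i}{i'} must be viewed as a prefix of the \emph{global} optimum \subseq{A}{i}{j}, and not merely as an arbitrary block, so that the sign information supplied by Fact~\ref{fact:nonnegprefsuf} can legitimately be applied.
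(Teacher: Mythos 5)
Your proof is correct. The paper states Fact~\ref{fact:prefmaxpref} without any explicit proof, and your argument---take a maximum scoring subsequence $\subseq{A}{i'}{j'}$ of $\subseq{A}{i}{j}$, extend it to the prefix $\subseq{A}{i}{j'}$, and use Fact~\ref{fact:nonnegprefsuf} to see that the adjoined block $\subseq{A}{i}{i'}$ has nonnegative score so that maximality forces equality---is exactly the reasoning the authors leave implicit, including the correct identification of where the global maximality of $\subseq{A}{i}{j}$ enters.
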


\begin{figure}[htb]
\centerline {\input{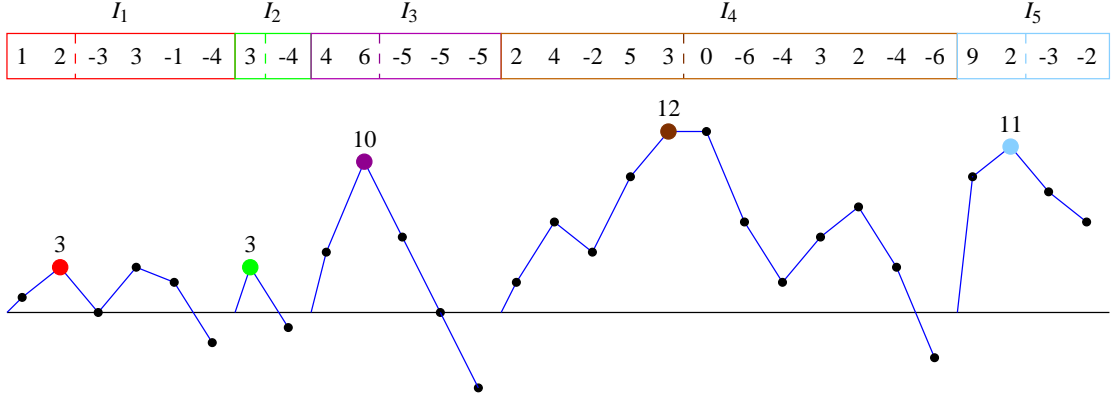}}
\caption{Partition into intervals of the sequence in
Figure~\ref{fig:sequence}. For each interval, the score of its prefixes is
indicated, as well as its maximum scoring subsequence.}
\label{fig:intervals}
\end{figure}

The definitions in the sequel are illustrated in Figure~\ref{fig:intervals}. The
subsequence $\subseq{A}{i}{j}$ is an {\em interval} if $\sumofseq{\subseq{A}{i}{j}} < 0$
or $j = n$, and $\sumofseq{\subseq{A}{i}{j'}} \geq 0$, for all $i \leq j' < j$.
The {\em partition into intervals} of $A$ is the concatenation $\seq{I_1 =
\subseq{A}{0}{j_1}, I_2 = \subseq{A}{i_2 = j_1}{j_2}, \ldots, I_\ell =
\subseq{A}{i_\ell}{j_\ell}}$ of the $\ell$ maximal intervals of $A$. Such a
partition is explored in Kadane's algorithm due to the fact that a maximum
scoring subsequence of $A$ is a subsequence of some of its intervals.

\begin{fact}
If $A_i^j$ is a maximum scoring subsequence of interval $I_k$ and $A_{i'}^{j'}$
is a prefix (suffix) of $A_i^j$ such that $\sumofseq{A_{i'}^{j'}} = 0$, then
$A_i^j \setminus A_{i'}^{j'}$ is a maximum scoring subsequence of $I_k$.
\label{fact:maxispref}
\end{fact}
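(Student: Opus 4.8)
The plan is to reduce the whole statement to the additivity of the score function over concatenation, so no structural property of intervals is actually needed. Concretely, I would first record the elementary identity that for any $i \leq m \leq j$ one has
\[
  \sumofseq{\subseq{A}{i}{j}} = \sumofseq{\subseq{A}{i}{m}} + \sumofseq{\subseq{A}{m}{j}},
\]
which is immediate from the defining formula $\sumofseq{\subseq{A}{i}{j}} = \sum_{k=i+1}^{j} a_k$. This is the only computational ingredient, and the rest is bookkeeping about which piece survives the deletion.

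Next I would dispose of the prefix case. A prefix $\subseq{A}{i'}{j'}$ of $\subseq{A}{i}{j}$ necessarily has $i' = i$, so it is of the form $\subseq{A}{i}{j'}$ with $i \leq j' \leq j$; deleting it leaves the suffix $\subseq{A}{i}{j} \setminus \subseq{A}{i}{j'} = \subseq{A}{j'}{j}$, which is again a contiguous subsequence of $I_k$, since $\subseq{A}{i}{j} \subseteq I_k$ forces $i_k \leq i \leq j' \leq j \leq j_k$. Applying the additivity identity with $m = j'$ together with the hypothesis $\sumofseq{\subseq{A}{i}{j'}} = 0$ gives $\sumofseq{\subseq{A}{j'}{j}} = \sumofseq{\subseq{A}{i}{j}}$. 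Thus $\subseq{A}{j'}{j}$ is a subsequence of $I_k$ whose score equals that of $\subseq{A}{i}{j}$, which by hypothesis is the largest score attainable inside $I_k$; hence $\subseq{A}{j'}{j}$ is itself a maximum scoring subsequence of $I_k$.

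The suffix case is symmetric: a suffix $\subseq{A}{i'}{j'}$ has $j' = j$, and removing $\subseq{A}{i'}{j}$ (of score $0$) from $\subseq{A}{i}{j}$ leaves the prefix $\subseq{A}{i}{i'}$, whose score again equals $\sumofseq{\subseq{A}{i}{j}}$ by the same identity, so it too is a maximum scoring subsequence of $I_k$. I expect no genuine obstacle here; the statement is a direct consequence of score additivity, and the only point deserving a sentence is checking that the leftover piece is still a contiguous subsequence contained in $I_k$ (which it is, being a prefix or suffix of $\subseq{A}{i}{j} \subseteq I_k$). In particular, no appeal to the interval structure or to Facts~\ref{fact:nonnegprefsuf}--\ref{fact:prefmaxpref} is required beyond the definition of a maximum scoring subsequence.
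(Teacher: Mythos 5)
Your proof is correct: the paper states this Fact without an explicit proof, and your argument via additivity of $\sumofseq{\cdot}$ over concatenation is precisely the elementary justification the authors implicitly rely on. The only point worth keeping is the one you already flag — that the leftover piece $\subseq{A}{j'}{j}$ (resp.\ $\subseq{A}{i}{i'}$) is still a contiguous subsequence of $I_k$ — so nothing is missing.
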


While the previous properties are general for every sequence, the next one
is more specific to the resulting sequence of an insertion. Recall that $x$
stands for the real number given as input to the ISS problem.
Assume that the insertion index $p$ is such that $i_k \leq p < j_k$, which
means that $x$ is inserted in $I_k$.

\begin{fact}
The score of all elements of $I_k$ whose indices are greater than $p$
are affected by the insertion of $x$ in the following way: for every $p < q
\leq j_k+1$, $\sumofseq{\subseq{{\insertintoseq{x}{A}{p}}}{i_k}{q}} =
\sumofseq{\subseq{A}{i_k}{q-1}} + x$.
\label{fact:inssumx}
\end{fact}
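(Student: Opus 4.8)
The plan is to prove the identity by directly expanding both sides as sums of elements; the only real content is to keep track of the one-position shift that inserting $x$ induces on the elements of $A$ lying to the right of $p$. Since the statement is purely about the additive bookkeeping of scores, no appeal to the earlier facts about maximum scoring subsequences is needed.

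First I would make the elements of $\insertintoseq{x}{A}{p} = \seq{\seqindex{a}{1}, \ldots, \seqindex{a}{p}, x, \seqindex{a}{p+1}, \ldots, \seqindex{a}{n}}$ explicit: the element occupying position $m$ of $\insertintoseq{x}{A}{p}$ equals $\seqindex{a}{m}$ when $m \leq p$, equals $x$ when $m = p+1$, and equals $\seqindex{a}{m-1}$ when $m \geq p+2$. In particular, every element whose index in $A$ is strictly greater than $p$ has its index raised by one in $\insertintoseq{x}{A}{p}$, which is exactly the intuition of the statement made precise.

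Next, fix $q$ with $p < q \leq j_k+1$ and expand the left-hand side as the sum of the elements occupying positions $i_k+1$ through $q$ of $\insertintoseq{x}{A}{p}$. Because $i_k \leq p < q$, this index range straddles the insertion point, so I would split it into three consecutive blocks: positions $i_k+1$ through $p$, contributing $\sum_{m=i_k+1}^{p} \seqindex{a}{m}$; the single position $p+1$, contributing $x$; and positions $p+2$ through $q$, which after the substitution $m \mapsto m-1$ contribute $\sum_{m=p+1}^{q-1} \seqindex{a}{m}$. Merging the first and third blocks (whose ranges $[i_k+1,p]$ and $[p+1,q-1]$ are contiguous) gives
\[
\sumofseq{\subseq{{\insertintoseq{x}{A}{p}}}{i_k}{q}} = x + \sum_{m=i_k+1}^{q-1} \seqindex{a}{m} = \sumofseq{\subseq{A}{i_k}{q-1}} + x,
\]
which is the claimed equality.

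I do not expect any genuine obstacle here, as the statement is a routine re-indexing identity. The only point demanding care is the off-by-one arithmetic, in particular the boundary case $q = p+1$, where the third block is empty and one must check that the equality still reads $\sumofseq{\subseq{A}{i_k}{p}} + x$ (it does, since then $q-1 = p$ and the sum collapses to the first block plus $x$). Getting the shift direction right, so that the right-hand side terminates at index $q-1$ rather than $q$, is the single place where an index error could slip in.
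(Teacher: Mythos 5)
Your proof is correct: the direct expansion into the three blocks (elements up to position $p$, the inserted $x$ at position $p+1$, and the shifted elements $a_{p+1},\ldots,a_{q-1}$) is exactly the routine re-indexing the paper relies on, since Fact~\ref{fact:inssumx} is stated there without any written proof. Your attention to the boundary case $q = p+1$ and to the endpoint $q-1$ on the right-hand side covers the only places an error could occur, so nothing is missing.
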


This fact is the reason why the discussion of cases $x < 0$ and $x >
0$ is carried out separately in the two next sections.
For the positive case, since all prefixes of $I_k$ have nonnegative scores
(Fact~\ref{fact:nonnegprefsuf}), consecutive intervals may be merged in the
resulting sequence, provided that $x$ is large enough to
make $\sumofseq{\subseq{{\insertintoseq{x}{A}{p}}}{i_k}{j_k+1}} > 0$.
For instance, consider interval $I_1$ in Figure~\ref{fig:intervals}. The
insertion of $x = 6$ at the very end of this interval (i.e, at insertion
position $p = j_1-1 = 5$) creates the subsequence $\seq{\subseq{A}{i_1
= 0}{5}, 6, -4}$ and the new interval $\seq{\subseq{A}{0}{5}, 6,
-4, I_2, I_3}$. On the other hand, for the
negative case, the insertion of $x$ may split $I_k$ into two or more intervals
if there exists $p \leq q \leq j_k$ such that
$\sumofseq{\subseq{{\insertintoseq{x}{A}{p}}}{i_k}{q}} < 0$, in which case
$\subseq{{\insertintoseq{x}{A}{p}}}{i_k}{q}$ is an interval of
$\insertintoseq{x}{A}{p}$ but $\subseq{{\insertintoseq{x}{A}{p}}}{i_k}{j_k}$ is
not. Again in Figure~\ref{fig:intervals}, the insertion of $x = -6$ between the elements -2 and 5 of interval $I_4$ splits it into 3 intervals, namely $\seq{2, 4, -2, -6}$, $\seq{5, 3, 0, -6, -4}$, and $\seq{3, 2, -4, -6}$.

\section{Inserting $x < 0$} \label{section-negative-case}

As already mentioned in the Introduction, solving the ISS problem when $x < 0$ corresponds to
insert $x$ in some maximum scoring subsequence $\subseq{A}{i}{j}$. According to
Fact~\ref{fact:maxispref}, we assume that $\subseq{A}{i}{j}$ is minimal with respect to
inclusion. What remains to be specified is the way to find an appropriate
insertion index in $\subseq{A}{i}{j}$. The cases $n = 0$, $j \leq i + 1$, and $\sumofseq{\subseq{A}{i}{j}} = 0$ are trivial. Then, assume that $n > 0$, $j > i + 1$, and
$\sumofseq{\subseq{A}{i}{j}} > 0$. Inserting $x$ inside $\subseq{A}{i}{j}$ divides the latter in its left (a prefix
of $\subseq{A}{i}{j}$) and right (a suffix of $\subseq{A}{i}{j}$) parts, and
different choices of $p$ may lead to different values of
\insertintoseq{x}{A}{p}, as depicted in Figure~\ref{fig:neginss}. Using Fact~\ref{fact:prefmaxpref}, the algorithm computes
the insertion index $i < p < j$ such that the maximum between
$\valueofseq{\subseq{A}{i}{p}}$
and $\valueofseq{\subseq{A}{p}{j}}$ is as small as possible.
Such a computation can be carried out by simply performing a left-to-right traversal of $\subseq{A}{i}{j}$ to compute (and store) 
the values of all possible prefixes of $\subseq{A}{i}{j}$, and a further right-to-left traversal
to compute the values of all possible suffixes of
$\subseq{A}{i}{j}$. This strategy is materialized in
Algorithm {\algoptposneg}, which receives as input an array with the elements of $A$ and the number $x < 0$, and returns $p$ 
computed as above.

\begin{figure}[htb]
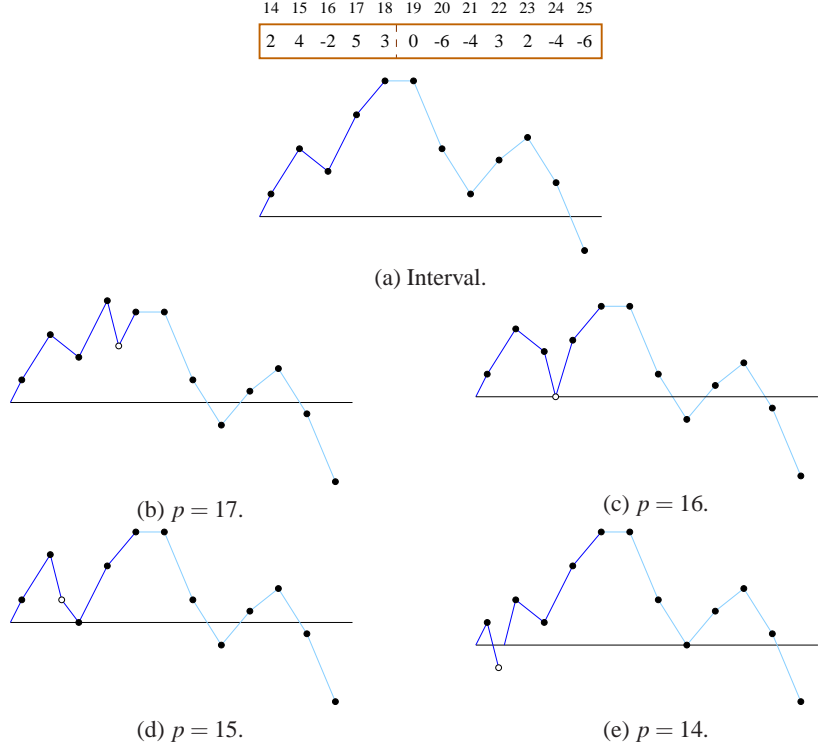

\centering

\subfloat[position=bottom][Interval.]{
\begin{minipage}{\textwidth}
\centerline{\input{negins.pstex_t}}
\end{minipage} \label{sfig:negins}}
\quad\quad\quad
\subfloat[position=bottom][$p = 17$.]{
\begin{minipage}{0.30\textwidth}
\input{negins1.pstex_t}
\end{minipage} \label{sfig:negins1}}
\quad\quad\quad
\subfloat[position=bottom][$p = 16$.]{
\begin{minipage}{0.30\textwidth}
\input{negins2.pstex_t}
\end{minipage} \label{sfig:negins2}}
\quad\quad\quad
\subfloat[position=bottom][$p = 15$.]{
\begin{minipage}{0.30\textwidth}
\input{negins3.pstex_t}
\end{minipage} \label{sfig:negins3}}
\quad\quad\quad
\subfloat[position=bottom][$p = 14$.]{
\begin{minipage}{0.30\textwidth}
\input{negins4.pstex_t}
\end{minipage} \label{sfig:negins4}}

\caption{Possible insertion positions in the interval $I_4$ of the example in
Figure~\ref{fig:intervals} for $x = -4$.}
\label{fig:neginss}
\end{figure}

\begin{lemma}
 Algorithm {\algoptposneg}$(A, x)$
 returns an optimal insertion index, provided that $x < 0$. In addition, it
 runs in $O(n)$ time and space.
 \label{thm:insneg}
\end{lemma}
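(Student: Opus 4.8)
The plan is to prove correctness by reducing the whole optimisation to the single question of where to split a maximum scoring subsequence, and then to verify that this restricted question is exactly what the two traversals answer. First I would fix a maximum scoring subsequence $\subseq{A}{i}{j}$ of $A$ that is minimal with respect to inclusion; such a subsequence exists by repeatedly invoking Fact~\ref{fact:maxispref} to strip off zero-score prefixes and suffixes. The trivial cases ($n=0$, $j\le i+1$, or $\sumofseq{\subseq{A}{i}{j}}=0$) are dismissed at once, since in each of them no insertion of a negative $x$ can lower $\valueofseq{A}$, so the index returned by the algorithm is vacuously optimal. In the remaining case I would observe that any insertion index $p$ with $p\le i$ or $p\ge j$ leaves $\subseq{A}{i}{j}$ intact as a contiguous block of $\insertintoseq{x}{A}{p}$, whence $\valueofseq{\insertintoseq{x}{A}{p}}\ge\sumofseq{\subseq{A}{i}{j}}=\valueofseq{A}$; it therefore suffices to search over interior indices $i<p<j$ and to check that the best such index is never worse than $\valueofseq{A}$.

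The core of the argument analyses $\valueofseq{\insertintoseq{x}{A}{p}}$ for one fixed split $i<p<j$, classifying every candidate subsequence of $\insertintoseq{x}{A}{p}$ by its position relative to the inserted $x$: those lying entirely in $\subseq{A}{0}{p}$, those lying entirely in $\subseq{A}{p}{n}$, and those that contain $x$. For the first two groups, Fact~\ref{fact:nonnegprefsuf} (suffixes of $\subseq{A}{0}{i}$ and prefixes of $\subseq{A}{j}{n}$ have nonpositive score) lets me discard any subsequence straddling the boundary of $\subseq{A}{i}{j}$, yielding maxima $\max\{\valueofseq{\subseq{A}{0}{i}},\valueofseq{\subseq{A}{i}{p}}\}$ and $\max\{\valueofseq{\subseq{A}{p}{j}},\valueofseq{\subseq{A}{j}{n}}\}$ respectively. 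For the crossing group I would combine the same fact with the defining property of a maximum scoring subsequence that every prefix of $\subseq{A}{i}{j}$ has nonnegative score: this forces the best boundary suffix of $\subseq{A}{0}{p}$ to be the whole of $\subseq{A}{i}{p}$ and, symmetrically, the best boundary prefix of $\subseq{A}{p}{n}$ to be the whole of $\subseq{A}{p}{j}$. Hence the largest crossing subsequence has score $\sumofseq{\subseq{A}{i}{p}}+x+\sumofseq{\subseq{A}{p}{j}}=\sumofseq{\subseq{A}{i}{j}}+x$, which is \emph{independent} of $p$.

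Assembling the three groups gives the identity $\valueofseq{\insertintoseq{x}{A}{p}}=\max\{C,\valueofseq{\subseq{A}{i}{p}},\valueofseq{\subseq{A}{p}{j}}\}$, where $C=\max\{\valueofseq{\subseq{A}{0}{i}},\valueofseq{\subseq{A}{j}{n}},\sumofseq{\subseq{A}{i}{j}}+x\}$ does not depend on the split. Since taking the maximum with a constant preserves a minimiser of the varying part, any $p$ minimising $\max\{\valueofseq{\subseq{A}{i}{p}},\valueofseq{\subseq{A}{p}{j}}\}$ also minimises $\valueofseq{\insertintoseq{x}{A}{p}}$, and this varying quantity is exactly what {\algoptposneg} optimises; the two traversals maintain, via the running maximum of Kadane's recurrence and with the aid of Fact~\ref{fact:prefmaxpref}, the prefix-half values $\valueofseq{\subseq{A}{i}{p}}$ and suffix-half values $\valueofseq{\subseq{A}{p}{j}}$ for every $p$. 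Finally, each term of $C$ is at most $\valueofseq{A}$ (the first two are subsequence values of $A$, and $\sumofseq{\subseq{A}{i}{j}}+x=\valueofseq{A}+x<\valueofseq{A}$), and so are both half-values, so every interior split gives value at most $\valueofseq{A}$; combined with the exterior lower bound of the first paragraph, this shows the interior optimum is globally optimal.

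For the complexity claim I would note that a minimal maximum scoring subsequence is located by one run of \algkadane\ in \Oof{n} time, that the left-to-right pass accumulating all $\valueofseq{\subseq{A}{i}{p}}$ and the right-to-left pass accumulating all $\valueofseq{\subseq{A}{p}{j}}$ are each Kadane-style running-maximum recurrences costing \Oof{1} per element, and that the final scan selecting the minimising $p$ is linear; every auxiliary array has size \Oof{n}. The step I expect to be the main obstacle is the crossing-subsequence analysis: one must argue carefully, from the sign structure of the minimal maximum scoring subsequence, that the optimal boundary suffix and prefix are the full halves $\subseq{A}{i}{p}$ and $\subseq{A}{p}{j}$, so that the crossing contribution collapses to the $p$-independent constant $\sumofseq{\subseq{A}{i}{j}}+x$. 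It is precisely this collapse that makes the simple criterion $\max\{\valueofseq{\subseq{A}{i}{p}},\valueofseq{\subseq{A}{p}{j}}\}$ sufficient, even though the crossing term itself is not dominated by that criterion.
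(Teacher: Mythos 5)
Your proof is correct, and it reaches the conclusion by a genuinely different route than the paper. The paper fixes the index $p$ returned by the algorithm, takes an arbitrary competitor $p'$, and shows $\valueofseq{\insertintoseq{x}{A}{p}} \le \valueofseq{\insertintoseq{x}{A}{p'}}$ through a case analysis on where a minimal maximum scoring subsequence $T$ of $\insertintoseq{x}{A}{p}$ lies: whether $x\in T$, whether $T$'s elements survive contiguously in $\insertintoseq{x}{A}{p'}$, and whether $T$ meets $\subseq{A}{i}{j}$. You instead establish, for every interior split $i<p<j$, the exact identity $\valueofseq{\insertintoseq{x}{A}{p}}=\max\set{C,\valueofseq{\subseq{A}{i}{p}},\valueofseq{\subseq{A}{p}{j}}}$ with $C$ independent of $p$, obtained by partitioning candidate subsequences into those left of $x$, those right of $x$, and those crossing $x$, and by using Fact~\ref{fact:nonnegprefsuf} to collapse the crossing contribution to the constant $\sumofseq{\subseq{A}{i}{j}}+x$. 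Both arguments rest on the same structural facts, but yours buys an exact expression for the objective at every position, which makes it transparent why minimizing $\max\set{\valueofseq{\subseq{A}{i}{p}},\valueofseq{\subseq{A}{p}{j}}}$ is the correct criterion even though the crossing term is not dominated by that maximum; it also justifies explicitly (via the exterior bound $\valueofseq{\insertintoseq{x}{A}{p}}\ge\valueofseq{A}$ for $p\le i$ or $p\ge j$, combined with the observation that inserting a negative number never raises the value above $\valueofseq{A}$) the restriction of the search to the interior of $\subseq{A}{i}{j}$, a point the paper handles only implicitly. The paper's exchange argument is shorter on the page but buries the position-independence of the crossing term inside its first case. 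The complexity analysis is essentially identical in the two versions.
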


\begin{proof}
Let $p \in \set{i+1, \ldots, j-1}$ be the value computed by the algorithm and
$p' \neq p$ be another arbitrary insertion index. We show that
$\valueofseq{\insertintoseq{x}{A}{p}} \leq \valueofseq{\insertintoseq{x}{A}{p'}}$. Let in addition $T$ be a maximum scoring subsequence of
$\insertintoseq{x}{A}{p}$, minimal with respect to inclusion. Note that $T \ne \seq{}$ since $\valueofseq{A} =
\sumofseq{\subseq{A}{i}{j}} > 0$. Moreover, by
Fact~\ref{fact:nonnegprefsuf}, $x$ is neither the first nor the last element
of $T$. So, let $y$ and $z$ be such that $\subseq{T}{0}{1} = \seq{a_{y+1}}$ and
$\subseq{T}{|T|-1}{|T|} = \seq{a_z}$. The first case to be analyzed is when $x$ is in $T$, i.e. $y < p < z$ (Figure~\ref{fig:thmneg}\subref{sfig:xinT}).
In this case, by Fact~\ref{fact:nonnegprefsuf} and the minimality of
$\subseq{A}{i}{j}$ and $T$, $y = i$ and $z = j$ or, in other words, $T =
\seq{\subseq{A}{i}{p}, x, \subseq{A}{p}{j}}$. The elements of $\subseq{A}{i}{j}$
also form, perhaps with the occurrence of $x$ at some position, a subsequence
$T'$ of $\insertintoseq{x}{A}{p'}$, and since $x < 0$, we conclude that $\sumofseq{T'} \geq \sumofseq{T}$ (equality holds if $y < p' < z$).
Then $\valueofseq{\insertintoseq{x}{A}{p}} = \sumofseq{T} \leq \sumofseq{T'}
\leq \valueofseq{\insertintoseq{x}{A}{p'}}$, as claimed.

Assume that $p \notin \set{y, \ldots, z}$. If $T$'s elements also
form a subsequence of $\insertintoseq{x}{A}{p'}$ (more precisely, $p' \notin
\set{y+1, \ldots, z-1}$), then $\valueofseq{\insertintoseq{x}{A}{p}} =
\sumofseq{T} \leq \valueofseq{\insertintoseq{x}{A}{p'}}$, as desired. Then, assume that $p'
\in \set{y+1, \ldots, z-1}$. If $\subseq{A}{i}{j}$ and $T$ are disjoint, then
\subseq{A}{i}{j} is also a subsequence of $\insertintoseq{x}{A}{p'}$. It turns out that $\valueofseq{\insertintoseq{x}{A}{p'}} \leq
\valueofseq{A} = \sumofseq{\subseq{A}{i}{j}}$ yields $\valueofseq{\insertintoseq{x}{A}{p'}} =
\sumofseq{\subseq{A}{i}{j}} = \valueofseq{A} \geq \valueofseq{\insertintoseq{x}{A}{p}}$.

Finally, we are left with the case when $\subseq{A}{i}{j}$ and $T$ are not
disjoint (Figure~\ref{fig:thmneg}\subref{sfig:SintT}).
By Fact~\ref{fact:nonnegprefsuf} and the minimality of $T$, either $y = i$ or $z
= j$.
Without loss of generality, let us suppose the first equality, since the other one is analogous.
We have that $\max
\set{\valueofseq{\subseq{A}{i}{p'}}, \valueofseq{\subseq{A}{p'}{j}}} \geq 
\max \set{\valueofseq{\subseq{A}{i}{p}}, \valueofseq{\subseq{A}{p}{j}}} \geq \valueofseq{\subseq{A}{i}{p}} =
\sumofseq{T}$.
The result follows since both $\subseq{A}{i}{p'}$ and $\subseq{A}{p'}{j}$ are
subsequences of $\insertintoseq{x}{A}{p'}$.

The complexities stem directly from the facts that the algorithm employs one additional
array of size $O(n)$ (for the left-to-right traversal of $A$) and performs, in addition to a call to a version of Kadane's
algorithm as a sub-routine returning the indices $i$ and $j$ and the score of
the minimal maximum scoring subsequence considered, two disjoint $O(n)$-time loops. 
\end{proof}

\begin{figure}[htb]
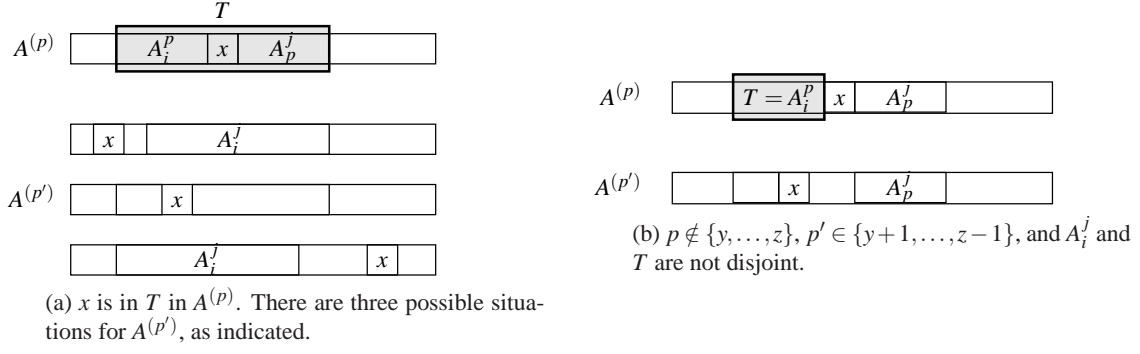

\centering

\subfloat[position=bottom][$x$ is in $T$ in
$\insertintoseq{x}{A}{p}$. There are three possible
situations for $\insertintoseq{x}{A}{p'}$, as
indicated.]{
\begin{minipage}{0.4\textwidth}
\input{neg1case.pstex_t}
\end{minipage} \label{sfig:xinT}}
\quad\quad\quad
\subfloat[position=bottom][$p \notin \set{y, \ldots, z}$, $p'
\in \set{y+1, \ldots, z-1}$, and $\subseq{A}{i}{j}$ and $T$ are not
disjoint.]{
\begin{minipage}{0.4\textwidth}
\input{neglastcase.pstex_t}
\end{minipage} \label{sfig:SintT}}

\caption{Cases of proof of Lemma~\ref{thm:insneg}.}
\label{fig:thmneg}
\end{figure}

\section{Inserting $x > 0$} \label{section-positive-case}

\newcommand{\fieldended}{\fieldname{ended}}
\newcommand{\fieldinspos}{\fieldname{pos}}
\newcommand{\fieldinterv}{\fieldname{interv}}
\newcommand{\fieldpeak}{\fieldname{peak}}
\newcommand{\vartopcurve}{\variablename{top}}

The discussion in this section is based on the partition into intervals
$\seq{I_1, I_2, \linebreak[0] \ldots, \linebreak[0] I_\ell}$ of $A$. For the sake of convenience, we
assume that $a_n = 0$ (observe that this can be done without loss of
generality since appending a new null element to $A$ does not alter the
scores of the suffixes of $A$), which means that 
$\sumofseq{I_\ell} \geq 0$. A particularity of this positive case, which is
derived from Fact~\ref{fact:inssumx}, is the following: for every interval $I_k$, index $j_k-1$ is at least as good as any other insertion index in this
interval. Thus, an optimal insertion index exists among $\Endofinterv{1}-1,
\Endofinterv{2}-1, \ldots, \Endofinterv{\ell}-1$, corresponding each one of
these indices to one interval of the partition into intervals of $A$. If
$p = \Endofinterv{k}-1$ is chosen as the insertion index, then the resulting
interval in $\insertintoseq{x	}{A}{p}$ (which may correspond to a merge of
several contiguous intervals of $A$ in the sense of Fact~\ref{fact:inssumx}) is referred
as to an {\em extended interval, relative to $I_k$} and denoted by
$\insertintoseq{x}{I}{k}$. If $I_{k'}$ is one of the intervals which are merged to produce $\insertintoseq{x}{I}{k}$, then
  $I_{k'}$ is a {\em subinterval} of $\insertintoseq{x}{I}{k}$. In the remaining
  of this section, we show a linear time algorithm to compute
$\valueofseq{\insertintoseq{x}{I}{k}}$, for all $k \in \set{1, 2, \ldots,
\ell}$. Clearly, the smallest of these values is associated with the optimal
insertion index for $x$.

For each $k$,
computing $\valueofseq{\insertintoseq{x}{I}{k}}$ by means of Kadane's
algorithm takes \Thetaof{n} time. Therefore, the exhaustive search takes
quadratic time in the worst case. However, as depicted in Figure~\ref{fig:subsums}, by graphically aligning the scores of the prefixes of
the extended intervals with respect to the intervals of $A$, one can visualize
some useful observations in connection with these curves which are explored in the
algorithm described in the sequel.
Let the sequence of negative elements composed by intervals' scores be denoted by $N = \seq{score(I_1), score(I_2),
\ldots, \linebreak[0] score(I_\ell)}$.

\begin{observation}
\label{observation-behaviour-of-a-curve}  Let $a \in I_{k'}$ be the
  element of indices $j$ in $\insertintoseq{x}{I}{k}$ and $j'$ in $I_{k'}$, $k'
  \geq k + 1$ (an assumption that is tacitly made here is that $I_{k'}$ is a
  subinterval of $\insertintoseq{x}{I}{k}$).
  Then,
\[
\begin{array}{lcl}
\sumofseq{\subseq{{\insertintoseq{x}{I}{k}}}{0}{j}} & = &
\sumofseq{\subseq{A}{i_k}{j_k-1}} + x + a_{j_k} +
\sumofseq{\subseq{A}{j_k}{i_{k'} + j'}} \\
                                                & = & x + 
\sumofseq{\subseq{A}{i_k}{j_{k'-1}}} + \sumofseq{\subseq{(I_{k'})}{0}{j'}}
\\
                                                & = & x + \sumofseq{\subseq{N}{k-1}{k'-1}} +
\sumofseq{\subseq{(I_{k'})}{0}{j'}}
\end{array}
\]
As an example, take $a = 4$, $\insertintoseq{x}{I}{k} =
\insertintoseq{x}{I}{1}$, and $I_{k'} = I_4$ in Figure~\ref{fig:subsums}. The
equality above indicates the distance of 1 between the
curves of $\insertintoseq{x}{I}{1}$ and $I_{k'}$ for the element $4 \in I_4$.
\end{observation}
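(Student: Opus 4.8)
The plan is to verify the three displayed equalities purely by unwinding the definition of the extended interval and repeatedly using additivity of the score over concatenation; there is no genuine conceptual difficulty, so I would begin by pinning down the three index systems in play to avoid off-by-one errors. Recall that $\insertintoseq{x}{I}{k}$ arises by inserting $x$ at index $p = j_k-1$, so within $I_k = \subseq{A}{i_k}{j_k}$ the value $x$ is placed between $a_{j_k-1}$ and $a_{j_k}$; since the intervals $I_1,\ldots,I_{k-1}$ and the prefix of $I_k$ up to $a_{j_k-1}$ are unaffected by this insertion, the extended interval begins at $A$-index $i_k$ and is the concatenation $\seq{\subseq{A}{i_k}{j_k-1}, x, a_{j_k}, a_{j_k+1}, \ldots}$. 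Under the standing assumption that $I_{k'}$ is a subinterval with $k' \geq k+1$, the element $a$ sitting at position $j'$ in $I_{k'}$ is exactly $a_{i_{k'}+j'}$, and I would record once and for all the contiguity identity $j_{k'-1} = i_{k'}$ and the definition $N_m = \sumofseq{I_m} = \sumofseq{\subseq{A}{i_m}{j_m}}$.

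For the first equality I would simply read off the prefix $\subseq{{\insertintoseq{x}{I}{k}}}{0}{j}$ up to and including $a$: it consists of the block $\subseq{A}{i_k}{j_k-1}$ lying before the insertion point, then the inserted value $x$, then the last element $a_{j_k}$ of $I_k$, and finally the contiguous $A$-block $\subseq{A}{j_k}{i_{k'}+j'}$ running from just after $I_k$ up to $a$. Summing these four contributions and invoking additivity of the score over concatenation yields the right-hand side of the first line verbatim.

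The second equality is pure regrouping of the $A$-sums. Writing $a_{j_k} = \sumofseq{\subseq{A}{j_k-1}{j_k}}$ and telescoping, the three terms $\sumofseq{\subseq{A}{i_k}{j_k-1}} + a_{j_k} + \sumofseq{\subseq{A}{j_k}{i_{k'}+j'}}$ collapse into the single prefix sum $\sumofseq{\subseq{A}{i_k}{i_{k'}+j'}}$; splitting this at index $i_{k'}$ and using $j_{k'-1} = i_{k'}$ gives $\sumofseq{\subseq{A}{i_k}{j_{k'-1}}} + \sumofseq{\subseq{A}{i_{k'}}{i_{k'}+j'}}$, and the second summand is precisely $\sumofseq{\subseq{(I_{k'})}{0}{j'}}$ by the local indexing of $I_{k'}$. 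For the third equality I would rewrite $\sumofseq{\subseq{A}{i_k}{j_{k'-1}}}$, the total score of the consecutive block $\seq{I_k, I_{k+1}, \ldots, I_{k'-1}} = \subseq{A}{i_k}{j_{k'-1}}$, as $\sum_{m=k}^{k'-1} \sumofseq{I_m}$; by the definition of $N$ this is exactly $\sumofseq{\subseq{N}{k-1}{k'-1}}$, which closes the chain.

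The only step requiring real care — and hence the place I would concentrate — is the index bookkeeping behind the first two equalities: correctly recognizing that $x$ splits $I_k$ into $\subseq{A}{i_k}{j_k-1}$ and the single element $a_{j_k}$, that $a = a_{i_{k'}+j'}$, and that consecutive intervals abut via $j_{k'-1} = i_{k'}$. With these three identities fixed up front, each equality follows immediately from additivity of the score over concatenation.
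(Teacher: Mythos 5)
Your verification is correct and follows exactly the reasoning the paper intends: the Observation is stated without a separate proof, the equalities being justified by unwinding the definition of the extended interval (insertion of $x$ at index $j_k-1$), additivity of $\sumofseq{\cdot}$ over concatenation, the contiguity identity $i_{k'} = j_{k'-1}$, and the definition of $N$. Your explicit bookkeeping of the three index systems is a sound elaboration of what the paper leaves implicit.
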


A first consequence of Observation~\ref{observation-behaviour-of-a-curve} is a
recurrence relation which is used to govern our dynamic programming algorithm.
If $k < \ell$, let $\insertintoseq{x}{I}{k} \cap \insertintoseq{x}{I}{k+1}$
stand for the concatenation of the common subintervals of
$\insertintoseq{x}{I}{k}$ and $\insertintoseq{x}{I}{k+1}$ (for the sake of illustration, $\insertintoseq{x}{I}{1} \cap \insertintoseq{x}{I}{2} = \seq{I_2, I_3, I_4}$ in the example of Figure~\ref{fig:subsums}). In addition, write $I_{k'} \subseteq \insertintoseq{x}{I}{k} \cap \insertintoseq{x}{I}{k+1}$ to say that interval
$I_{k'}$ is a common subinterval of $\insertintoseq{x}{I}{k}$
and $\insertintoseq{x}{I}{k+1}$.
The recurrence for $\valueofseq{\insertintoseq{x}{I}{k}}$ is given by
\begin{equation}
\valueofseq{\insertintoseq{x}{I}{k}} = \max
\set{\valueofseq{I_k}, x + \sumofseq{\subseq{A}{i_k}{j_k-1}}},
\label{eq:baserecur}
\end{equation}
if $k = \ell$ (considering that the last element of $A$ is null) or
($k < \ell$ and $\insertintoseq{x}{I}{k} \cap \insertintoseq{x}{I}{k+1} =
\emptyset$) or, otherwise,
\begin{equation}
\max \set{\valueofseq{I_k}, x + \sumofseq{\subseq{A}{i_k}{j_k-1}}, x +
\max_{I_{k'}
\subseteq
\insertintoseq{x}{I}{k} \cap \insertintoseq{x}{I}{k+1}}
\set{\sumofseq{\subseq{N}{k-1}{k'-1}} + \valueofseq{I_{k'}}}}.
\label{eq:recur}
\end{equation}
The first two terms in~\eqref{eq:baserecur} and~\eqref{eq:recur} indicate the best
insertion index in $I_k$, while the third one in~\eqref{eq:recur} gives the best
interval in $\insertintoseq{x}{I}{k} \cap \insertintoseq{x}{I}{k+1}$ (if any).
The crucial point is then the computation of $\max_{I_{k'} \subseteq
\insertintoseq{x}{I}{k} \cap \insertintoseq{x}{I}{k+1}}
\set{\sumofseq{\linebreak[0] \subseq{N}{k-1}{k'-1}} + \valueofseq{I_{k'}}}$ when $I_{k+1}$ is
a subinterval of $\insertintoseq{x}{I}{k}$ (i.e. $\insertintoseq{x}{I}{k} \cap
\insertintoseq{x}{I}{k+1} \ne \emptyset$), which is
performed in the light of the following additional observations.

\begin{observation} \label{observation-comparing-two-curves} Let $a \in
\insertintoseq{x}{I}{k'}$ be the element of indices $j$ and $j'$ in, respectively, $\insertintoseq{x}{I}{k}$ and
$\insertintoseq{x}{I}{k'}$, $k' \geq k+1$. Write $I_{k''}$ for the
interval containing $a$, and $j''$ for the index of $a$ in $I_{k''}$.
Assuming that $k'' \ne k'$, then
\[
\begin{array}{lcl}
\sumofseq{\subseq{{\insertintoseq{x}{I}{k}}}{0}{j}}
 - \sumofseq{\subseq{{\insertintoseq{x}{I}{k'}}}{0}{j'}} & = &
 \sumofseq{\subseq{N}{k-1}{k''-1}} + \sumofseq{\subseq{(I_{k''})}{0}{j''}} -
 \sumofseq{\subseq{N}{k'-1}{k''-1}} - \sumofseq{\subseq{(I_{k''})}{0}{j''}}\\
& = & \sumofseq{\subseq{N}{k-1}{k'-1}}
\end{array}
\]
Thus, the respective curves of $\insertintoseq{x}{I}{k}$ and
$\insertintoseq{x}{I}{k'}$ remain at a constant
distance for all intervals $I_{k'} \subseteq
\insertintoseq{x}{I}{k} \cap \insertintoseq{x}{I}{k+1}$, $k' \ne k+1$, with the
curve of $\insertintoseq{x}{I}{k'}$ above that of $\insertintoseq{x}{I}{k}$.
\end{observation}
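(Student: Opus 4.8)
The plan is to extract the two prefix scores $\sumofseq{\subseq{{\insertintoseq{x}{I}{k}}}{0}{j}}$ and $\sumofseq{\subseq{{\insertintoseq{x}{I}{k'}}}{0}{j'}}$ separately from Observation~\ref{observation-behaviour-of-a-curve}, and then subtract them so that the term $x$ and the intra-interval term $\sumofseq{\subseq{(I_{k''})}{0}{j''}}$ cancel, leaving only a difference of partial sums of $N$. The argument is therefore a double application of the preceding observation followed by a short manipulation of index windows.

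First I would apply Observation~\ref{observation-behaviour-of-a-curve} with base interval $I_k$ to the element $a$, which lies in $I_{k''}$ at index $j''$. Before invoking it I must verify its hypothesis that the interval holding $a$ lies strictly beyond the base: since $a \in \insertintoseq{x}{I}{k'}$ forces $I_{k''}$ to be a subinterval of $\insertintoseq{x}{I}{k'}$, we have $k'' \geq k'$, and together with $k' \geq k+1$ this yields $k'' \geq k+1$, as required. The observation then gives
\[
\sumofseq{\subseq{{\insertintoseq{x}{I}{k}}}{0}{j}} = x + \sumofseq{\subseq{N}{k-1}{k''-1}} + \sumofseq{\subseq{(I_{k''})}{0}{j''}}.
\]

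Next I would repeat the step with base interval $I_{k'}$, which produces
\[
\sumofseq{\subseq{{\insertintoseq{x}{I}{k'}}}{0}{j'}} = x + \sumofseq{\subseq{N}{k'-1}{k''-1}} + \sumofseq{\subseq{(I_{k''})}{0}{j''}}.
\]
Here the hypothesis requires $k'' \geq k'+1$, and this is exactly where the standing assumption $k'' \neq k'$ is used: combined with $k'' \geq k'$ it forces $k'' \geq k'+1$. Subtracting the second identity from the first cancels $x$ and the shared term $\sumofseq{\subseq{(I_{k''})}{0}{j''}}$; and since the windows $\set{k,\ldots,k''-1}$ and $\set{k',\ldots,k''-1}$ nest (because $k \leq k' \leq k''$), the difference of their scores is the score of $\set{k,\ldots,k'-1}$, namely $\sumofseq{\subseq{N}{k-1}{k'-1}}$. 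This is the claimed equality. The concluding remarks then follow immediately: the index $j''$ has dropped out, so the difference between the two curves is the same at every common position, and because $\sumofseq{I_m} < 0$ for each $m < \ell$ this difference is negative, so the curve of $\insertintoseq{x}{I}{k'}$ sits above that of $\insertintoseq{x}{I}{k}$.

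I do not expect a real obstacle; the single point that needs care — and the reason the hypothesis $k'' \neq k'$ is imposed — is the second invocation of Observation~\ref{observation-behaviour-of-a-curve}, because if $a$ were to lie in the base interval $I_{k'}$ of the upper curve the observation would not apply and the clean cancellation would fail.
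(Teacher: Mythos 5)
Your proposal is correct and matches the paper's (implicit, inline) derivation exactly: a double application of Observation~\ref{observation-behaviour-of-a-curve} with base intervals $I_k$ and $I_{k'}$, cancellation of $x$ and of $\sumofseq{\subseq{(I_{k''})}{0}{j''}}$, and telescoping of the partial sums of $N$; your explicit verification of the hypotheses (in particular why $k'' \neq k'$ is needed for the second application) and of the sign of $\sumofseq{\subseq{N}{k-1}{k'-1}}$ is a welcome addition but not a departure.
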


The last observation before going into the details of the algorithm is useful
to decide whether a given interval $I_{k'}$ is a subinterval of $\insertintoseq{x}{I}{k}$.

\begin{observation}
Observation~\ref{observation-behaviour-of-a-curve} implies that if interval
$I_{k'}$, $k' \geq k+1$, is contained in $\insertintoseq{x}{I}{k}$, then $x + \sumofseq{\subseq{N}{k-1}{k'-1}} \geq 0$.
  The converse is also true since $x + \sumofseq{\subseq{N}{k-1}{k'-1}} \geq 0$
  yields $x + \sumofseq{\subseq{N}{k-1}{k''-1}} \geq 0$, for all $k < k'' <
  k'$, because all members of $N$ are negative.
\end{observation}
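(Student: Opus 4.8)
The statement is an ``if and only if'': interval $I_{k'}$, with $k' \geq k+1$, is a subinterval of the extended interval $\insertintoseq{x}{I}{k}$ exactly when $x + \sumofseq{\subseq{N}{k-1}{k'-1}} \geq 0$. The plan is to prove the two implications separately, both anchored on the defining property of an interval: since $\insertintoseq{x}{I}{k}$ is an interval of $\insertintoseq{x}{A}{\Endofinterv{k}-1}$, each of its proper prefixes has nonnegative score and it terminates precisely at the first position where the accumulated score turns negative.

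For the forward direction, I would look at the prefix of $\insertintoseq{x}{I}{k}$ ending at the last element of $I_{k'-1}$, i.e. at the boundary separating $I_{k'-1}$ from $I_{k'}$. If $I_{k'}$ is a subinterval, then $\insertintoseq{x}{I}{k}$ is contiguous and contains all of $I_k, \ldots, I_{k'}$, so this is a proper prefix and hence has nonnegative score. Applying Observation~\ref{observation-behaviour-of-a-curve} to the last element of $I_{k'-1}$ (so that the $I_{k'-1}$-prefix is the whole interval, contributing $\sumofseq{I_{k'-1}} = \seqindex{N}{k'-1}$) rewrites this score as $x + \sumofseq{\subseq{N}{k-1}{k'-1}}$; the only special case is $k' = k+1$, where the same value $x + \seqindex{N}{k}$ is obtained directly from Fact~\ref{fact:inssumx} since the extended $I_k$ has score $\sumofseq{I_k} + x$. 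The desired inequality follows.

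For the converse, suppose $x + \sumofseq{\subseq{N}{k-1}{k'-1}} \geq 0$. Because every entry of $N$ except possibly $\seqindex{N}{\ell}$ is negative, dropping trailing terms only increases a sum of consecutive entries, so $x + \sumofseq{\subseq{N}{k-1}{m}} \geq x + \sumofseq{\subseq{N}{k-1}{k'-1}} \geq 0$ for each $k \leq m \leq k'-1$. By Observation~\ref{observation-behaviour-of-a-curve}, $x + \sumofseq{\subseq{N}{k-1}{m}}$ is exactly the score of the prefix of $\insertintoseq{x}{I}{k}$ ending at the last element of $I_m$, so that prefix score is nonnegative at the end of each of $I_k, \ldots, I_{k'-1}$. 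Within any single interval $I_m$ all proper prefixes $\subseq{(I_m)}{0}{j''}$ have nonnegative score, so by Fact~\ref{fact:inssumx} the prefix-score curve of $\insertintoseq{x}{I}{k}$ never drops below its end-of-$I_{m-1}$ value while traversing $I_m$. Hence the curve stays nonnegative throughout $I_k, \ldots, I_{k'-1}$, so $\insertintoseq{x}{I}{k}$ does not terminate before reaching $I_{k'}$, and $I_{k'}$ is therefore a subinterval.

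The step I expect to demand the most care is establishing that nonnegativity of the curve at the interval boundaries alone guarantees nonnegativity of the entire prefix-score curve up to $I_{k'}$ --- that is, that the boundary values $x + \sumofseq{\subseq{N}{k-1}{m}}$ are genuinely the minima of the curve across each $I_m$. This rests on the interval definition (proper prefixes of $I_m$ have nonnegative score while only the full score $\seqindex{N}{m}$ is negative) together with the shift-by-$x$ description of Fact~\ref{fact:inssumx}, and it is the one place where one must reason about all intermediate positions rather than only the boundaries.
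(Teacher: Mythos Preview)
Your proof is correct and follows the same line as the paper's own justification: Observation~\ref{observation-behaviour-of-a-curve} for the forward direction, and the fact that the entries of $N$ are negative (so the partial sums $x + \sumofseq{\subseq{N}{k-1}{m}}$ are monotone in $m$) for the converse. The paper states this as a terse observation and leaves implicit the step you flagged as needing care---that nonnegativity at the interval boundaries forces nonnegativity of the whole prefix-score curve---so your write-up is in fact more complete than the paper's.
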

 
\begin{figure}[htb]
\centerline {\input{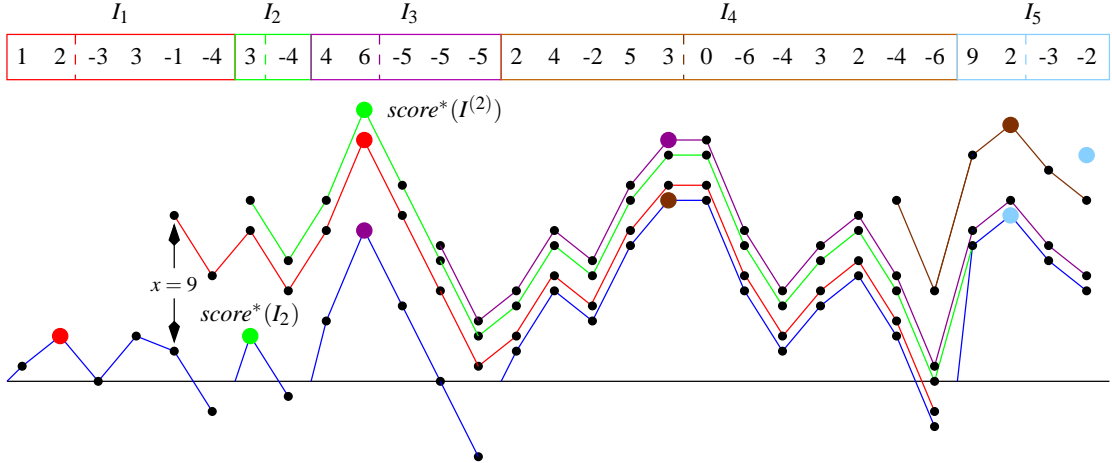}}
\caption{Scores of prefixes of all possible extended intervals resulting from
the insertion of $x = 9$ in the sequence in
Figure~\ref{fig:sequence}.
For each interval $I_k$, the points corresponding to $\valueofseq{I_k}$ and
$\valueofseq{\insertintoseq{x}{I}{k}}$ are highlighted. The last null element of
the sequence is omitted.}
\label{fig:subsums}
\end{figure}

The computation
of the largest scores of prefixes of extended intervals
$\insertintoseq{x}{I}{k}$ is divided into two phases.
The first phase is a modification of the Kadane's algorithm and its role is twofold. First, it determines the largest scores
of prefixes of $I_1, I_2, \ldots, I_\ell$ and, then, it sets the initial values of the
arrays that are used in the second phase. Such arrays are the following:
\begin{description}
\item[$SN$] suffix sums of $N$, i.e. $SN[k]$ equals
$\sumofseq{\subseq{N}{k-1}{\ell}}$, for all $k \in \set{1, 2, \ldots, \ell}$.
By definition, $\sumofseq{\subseq{N}{k-1}{k'-1}} = SN[k] - SN[k']$, for all $k'
\geq k$.
\item[$INTSCR$] largest intervals' scores, i.e. $INTSCR[k] =
\valueofseq{I_k}$, for all $k \in \set{1, 2, \ldots, \ell}$.
\item[$XSCR$] for each interval $k \in \set{1, 2, \ldots, \ell}$, this
array stores the score of the subsequence ending at $x$, provided that
$x$ is inserted in $I_k$, i.e. $XSCR[k] = x +
\sumofseq{\subseq{A}{i_k}{j_k-1}}$.
\end{description}

The second phase is devoted to the computation of the extended interval
containing the best insertion position for $x$.
This is done iteratively from $k = 1$ until $k = \ell$. For each $k$, the
recurrence relation~\eqref{eq:baserecur}--\eqref{eq:recur} is used to start the
computation of $\valueofseq{\insertintoseq{x}{I}{k}}$ and to update the maximum
score of extended intervals started in previous iterations as described in
Algorithm~\ref{algorithm-optpospos}. Such information is stored as follows.
The array $EXTSCR$ contains the maximum scores of prefixes of the extended
intervals $\insertintoseq{x}{I}{k'}$, for all $k' \in \set{1, 2, \ldots, k}$.
The intervals with best prefix scores obtained so far are kept in the queue
$INTQ$. $Q$ is the rear of the queue $INTQ$, initialized at 0.

\begin{algorithm}[htb]
\caption{Second phase for the case $x > 0$ \label{algorithm-optpospos}}
\KwIn{Arrays $SN$, $INTSCR$, and $XSCR$ computed in the first phase}
\KwOut{An optimal insertion interval for $A$}
\BlankLine

\assignment{$k$}{1}{}
\assignment{$EXTSCR[k]$}{$\max \set{INTSCR[k], XSCR[k]}$}{} \nllabel{lin:k1}
\assignment{$Q$}{$1$}{}
\assignment{$INTQ[Q]$}{$k$}{}
\For {\assignment[nosemicolon]{$k$}{$2, \ldots, \ell$}{}} { \nllabel{lin:S}
\assignment{$DIST$}{$x + SN[INTQ[Q]] - SN[k]$}{} \nllabel{lin:dist}
\While {$DIST \geq 0$ and $DIST + INTSCR[k] > EXTSCR[INTQ[Q]]$} {
\nllabel{lin:while}
	\assignment{$EXTSCR[INTQ[Q]]$}{$DIST + INTSCR[k]$}{} \nllabel{lin:scrinc}
	\If {$Q > 1$ and $EXTSCR[INTQ[Q]] \geq EXTSCR[INTQ[Q-1]]$} {
		\assignment{$Q$}{$Q-1$}{}
		\assignment{$DIST$}{$x + SN[INTQ[Q]] - SN[k]$}{}
	}
}
\assignment{$EXTSCR[k]$}{$\max \set{INTSCR[k], XSCR[k]}$}{}
\nllabel{lin:extscrk}
\If {$EXTSCR[k] < EXTSCR[INTQ[Q]]$} {
	\assignment{$Q$}{$Q + 1$}{}
	\assignment{$INTQ[Q]$}{$k$}{} \nllabel{lin:intqk}
}
}
\Return $INTQ[Q]$ \;
\BlankLine
\end{algorithm}

The correctness of the two-phase algorithm stems from the following lemma.

\begin{lemma} \label{lemma-optpospos}
For every iteration $k$ (just before execution of line~\ref{lin:S} of Algorithm~\ref{algorithm-optpospos}), let $I_{k'}$ be an interval and $k''
= INTQ[Q]$. Then, the following conditions hold:
\begin{enumerate}
  \item $EXTSCR[k''] = \valueofseq{\insertintoseq{x}{I}{k''}
  \setminus \subseq{A}{j_k}{j_\ell}}$;
  \label{it:extscr}
  \item if $Q > 1$ and $k'$ appears in $INTQ$ but $k'' \ne k'$, then
  $k' < k''$ and $\valueofseq{\insertintoseq{x}{I}{k''} \setminus
  \subseq{A}{j_k}{j_\ell}} <
  \valueofseq{\insertintoseq{x}{I}{k'} \setminus \subseq{A}{j_k}{j_\ell}}$; and
  \label{it:inqueue}
  \item if $k' < k$ does not appear in $INTQ$, then $k''$ is such
  that $\valueofseq{\insertintoseq{x}{I}{k''} \setminus \subseq{A}{j_k}{j_\ell}}
  \leq \valueofseq{\insertintoseq{x}{I}{k'} \setminus \subseq{A}{j_k}{j_\ell}}$.
  \label{it:notinqueue}
\end{enumerate}
\end{lemma}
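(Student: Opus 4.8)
The plan is to read the three conditions as a loop invariant for Algorithm~\ref{algorithm-optpospos} and prove it by induction on the iteration counter $k$. For brevity write $V_k(m) = \valueofseq{\insertintoseq{x}{I}{m} \setminus \subseq{A}{j_k}{j_\ell}}$ for the value of the extended interval relative to $I_m$ truncated to the part lying in $I_1, \ldots, I_k$. Since this truncation is itself an interval, Fact~\ref{fact:prefmaxpref} lets me take its maximum scoring subsequence to be a prefix, and Observation~\ref{observation-behaviour-of-a-curve} then yields the closed form $V_k(m) = \max\set{\valueofseq{I_m},\; x + \sumofseq{\subseq{A}{i_m}{j_m-1}},\; \max_{m < k' \le k}\left(x + \sumofseq{\subseq{N}{m-1}{k'-1}} + \valueofseq{I_{k'}}\right)}$, the last maximum ranging over those $k'$ for which $I_{k'}$ is a subinterval of $\insertintoseq{x}{I}{m}$. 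This is exactly the truncation of the recurrence~\eqref{eq:baserecur}--\eqref{eq:recur}, so at $k=\ell$ it becomes $\valueofseq{\insertintoseq{x}{I}{m}}$; hence the invariant delivers that the returned rear index minimizes $\valueofseq{\insertintoseq{x}{I}{\cdot}}$.

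The first thing I would do is strengthen the hypothesis, because the clauses as literally stated speak only about the rear and do not close under induction. The strengthened invariant I would carry at the top of each iteration is: the indices in $INTQ$ strictly increase from front to rear; the values $EXTSCR[INTQ[1]] > \cdots > EXTSCR[INTQ[Q]]$ strictly decrease; $EXTSCR[m] = V_k(m)$ for \emph{every} $m$ in the queue (not just the rear); and every $k' < k$ absent from the queue satisfies $V_k(k') \ge V_k(INTQ[Q])$. Clause~\ref{it:extscr} is the rear case of the $EXTSCR$ equality, clause~\ref{it:inqueue} is the rear case of strict monotonicity, and clause~\ref{it:notinqueue} is the domination statement. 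The base case is the state after the initialization (through line~\ref{lin:k1} and the queue setup): $EXTSCR[1] = \max\set{INTSCR[1], XSCR[1]} = V_1(1)$, since the truncated extended interval relative to $I_1$ comprises only $\subseq{A}{i_1}{j_1-1}$, $x$, and $a_{j_1}$, and the monotonicity and domination clauses hold trivially for a one-element queue.

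For the inductive step I would analyse one pass of the loop body. Using $SN[INTQ[Q]] - SN[k] = \sumofseq{\subseq{N}{INTQ[Q]-1}{k-1}}$, line~\ref{lin:dist} computes $DIST = x + \sumofseq{\subseq{N}{INTQ[Q]-1}{k-1}}$, so the guard $DIST \ge 0$ in line~\ref{lin:while} is precisely the containment test of the last observation (whether $I_k$ is a subinterval of $\insertintoseq{x}{I}{INTQ[Q]}$), and $DIST + INTSCR[k]$ is exactly the new $I_k$-contribution to $V_k(INTQ[Q])$; thus line~\ref{lin:scrinc} transforms $EXTSCR$ from $V_{k-1}$ into $V_k$ for the rear. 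Because the containment sum only decreases as the start index decreases, $I_k$ fails to be a subinterval of an extended interval as soon as it fails for one with larger start index, so the while-loop (scanning from the rear) correctly stops before touching any front element whose value should be unchanged, preserving the $EXTSCR = V_k$ equality queue-wide. The crux is that each pop is safe: when the updated rear $k'' = INTQ[Q]$ reaches $EXTSCR[k''] \ge EXTSCR[INTQ[Q-1]]$, I must show $k''$ is dominated by $m_1 = INTQ[Q-1]$ at every later iteration. Letting $r(m)$ be the largest index of a subinterval of $\insertintoseq{x}{I}{m}$, the containment observation gives $r(m_1) \le r(k'')$, so every future interval that still feeds $V(m_1)$ also feeds $V(k'')$; and Observation~\ref{observation-comparing-two-curves} says it feeds $V(k'')$ by a fixed positive amount more (the curve of $\insertintoseq{x}{I}{k''}$ sits above that of $\insertintoseq{x}{I}{m_1}$ at constant distance). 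Hence $V_t(k'') - V_t(m_1)$ is nondecreasing for $t \ge k$, so $V_k(k'') \ge V_k(m_1)$ propagates to all $t \ge k$ and $k''$ may be discarded forever, which keeps the queue monotone.

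Finally I would treat the conditional push in lines~\ref{lin:extscrk}--\ref{lin:intqk}: line~\ref{lin:extscrk} sets $EXTSCR[k] = \max\set{INTSCR[k], XSCR[k]} = V_k(k)$ (only the first two recurrence terms, correct since no later interval has been seen), and $k$ is appended iff $V_k(k) < V_k(INTQ[Q])$, making it the new strict minimum and new rear while leaving its predecessors strictly larger; if the test fails, $k$ is dominated by the rear and omitted, giving clause~\ref{it:notinqueue} for $k' = k$. The single rear-versus-predecessor comparisons are upgraded to strict monotonicity over the whole queue by transitivity: the induction hypothesis $V_{k-1}(\text{old rear}) < V_{k-1}(\text{earlier members})$ combined with the pop inequality $V_k(m_1) \le V_k(\text{old rear})$ gives $V_k(m_1) < V_k(\text{earlier members})$. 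I expect the main obstacle to be exactly this ``dominated-forever'' argument together with the need to run the stronger invariant — monotonicity of the entire queue and $EXTSCR$-correctness on all members rather than only the rear — since the while-loop pops several elements per iteration and the proof must guarantee that no popped index can ever again be optimal.
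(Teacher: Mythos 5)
Your proof follows essentially the same route as the paper's: induction on $k$, using Observation~\ref{observation-behaviour-of-a-curve} to justify the $DIST$ computation and the containment test, Observation~\ref{observation-comparing-two-curves} plus the queue's monotonicity to justify that popped indices are dominated forever, and the recurrence~\eqref{eq:baserecur}--\eqref{eq:recur} to validate the updates at lines~\ref{lin:scrinc} and~\ref{lin:extscrk}. Your explicit strengthening of the invariant ($EXTSCR[m]=V_k(m)$ for \emph{all} queue members and strict monotonicity of the whole queue, not just the rear) is needed for the induction to close after multiple pops and is only implicit in the paper's proof, so it is a welcome clarification rather than a divergence.
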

\begin{proof}
By induction on $k$. For $k = 1$, condition~\ref{it:extscr} holds trivially
due to line~\ref{lin:k1}, while conditions~\ref{it:inqueue}
and~\ref{it:notinqueue} hold by vacuity.
Let $k > 1$. We need to analyze the changes in $INTQ$. We start with the
intervals that are removed from $INTQ$. At line~\ref{lin:dist},
Observation~\ref{observation-behaviour-of-a-curve} is used to compute the distance between the curves of $\insertintoseq{x}{I}{k''}$ and
$I_k$. If this distance is negative, then $I_k$ is not a subinterval of
$\insertintoseq{x}{I}{k''}$. Otherwise, condition~\ref{it:extscr} of the
induction hypothesis is used in the comparison of
line~\ref{lin:while} and $EXTSCR[k'']$ is updated at line~\ref{lin:scrinc}
according to~\eqref{eq:recur} using Observation~\ref{observation-behaviour-of-a-curve}.
So, condition~\ref{it:extscr} remains valid for $k$ up to this point of the
execution. If $EXTSCR[k'']$ increases (i.e. line~\ref{lin:scrinc} is executed),
then Observation~\ref{observation-comparing-two-curves} and
condition~\ref{it:inqueue} of the induction hypothesis are evocated to remove
$\insertintoseq{x}{I}{k''}$ from the queue respecting
condition~\ref{it:notinqueue} in case a point of $I_k$ in the curve of
$\insertintoseq{x}{I}{k''}$ overcomes that of an interval that preceeds $I_{k''}$. $EXTSCR[k'']$ is updated again according to~\eqref{eq:recur} in order to satisfy condition~\ref{it:extscr}. This procedure is repeated until condition~\ref{it:inqueue} is valid for the
intervals still in $INTQ$.

Finally, lines~\ref{lin:extscrk}--\ref{lin:intqk} correspond to the
insertion in $INTQ$. The maximum score of the prefix of
$\insertintoseq{x}{I}{k}$ containing $I_k$ and $x$ only is updated at
line~\ref{lin:extscrk} and $\insertintoseq{x}{I}{k}$ enters the queue only if
such maximum score is below the maximum score of the prefix of
$\insertintoseq{x}{I}{INTQ[Q]}$ considered so far.
This implies that conditions~\ref{it:inqueue} and~\ref{it:notinqueue} are also
valid for $k$. 
\end{proof}

\begin{theorem}
 The ISS problem can be solved in $O(n)$ time and space.
 \label{thm:iss}
\end{theorem}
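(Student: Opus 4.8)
The plan is to prove the theorem by assembling the three sign cases for $x$ into a single procedure and checking that each piece, together with a shared linear-time preprocessing step, runs in $O(n)$ time and space. First I would compute the partition into intervals $\seq{I_1, \ldots, I_\ell}$ of $A$ together with the score $\valueofseq{I_k}$ of a minimal maximum scoring subsequence of each interval; this is precisely the work performed by (a version of) Kadane's algorithm and costs $O(n)$ time and space. With this structure in hand, the three cases are dispatched as follows.

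The case $x = 0$ is immediate: as noted in the Introduction, $\valueofseq{\insertintoseq{x}{A}{p}} = \valueofseq{A}$ for every $p$, so any index is optimal and reporting, say, $p = 0$ takes constant time. For $x < 0$, I would simply invoke Algorithm~\algoptposneg\ on $A$ and $x$; Lemma~\ref{thm:insneg} guarantees that it returns an optimal insertion index in $O(n)$ time and space. Both of these cases are thus already settled by earlier results, and the remaining work concerns $x > 0$.

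For $x > 0$ I would run the two-phase algorithm. The first phase is the modified Kadane's pass that fills the arrays $SN$, $INTSCR$, and $XSCR$; since each entry is a prefix or suffix sum of $N$, an interval score, or the quantity $x + \sumofseq{\subseq{A}{i_k}{j_k-1}}$, the whole phase amounts to a constant number of linear scans and costs $O(n)$. The second phase is Algorithm~\ref{algorithm-optpospos}, whose \emph{correctness} is exactly the content of Lemma~\ref{lemma-optpospos}. I would invoke that lemma's invariants at termination: once all intervals have been processed the truncation $\subseq{A}{j_k}{j_\ell}$ is empty, so $\insertintoseq{x}{I}{k''} \setminus \subseq{A}{j_k}{j_\ell}$ is the full extended interval $\insertintoseq{x}{I}{k''}$. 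Condition~\ref{it:extscr} then identifies $EXTSCR[INTQ[Q]]$ with $\valueofseq{\insertintoseq{x}{I}{INTQ[Q]}}$, while conditions~\ref{it:inqueue} and~\ref{it:notinqueue} certify that no other interval yields a smaller value. Since an optimal insertion index lies at the end of some interval, $INTQ[Q]$ is therefore an optimal insertion interval.

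The step I expect to be the main obstacle is the \emph{running time} of the second phase, since its inner \textbf{while} loop may execute several times within a single iteration of the outer \textbf{for} loop, so a naive bound would give only $O(n^2)$. The remedy is an amortized argument based on the queue $INTQ$: each interval is appended to $INTQ$ at most once (in lines~\ref{lin:extscrk}--\ref{lin:intqk}) and is removed at most once (by the decrement of $Q$ inside the \textbf{while} loop). After an update of $EXTSCR[INTQ[Q]]$ that is not followed by a pop, the loop condition $DIST + INTSCR[k] > EXTSCR[INTQ[Q]]$ fails and the loop halts; hence each outer iteration performs at most one non-popping update plus a number of further iterations bounded by the pops it triggers. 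Summing over all $k$, the total number of pops is at most the total number of pushes, which is at most $\ell - 1 \le n$, so the \textbf{while} loop does $O(n)$ work in aggregate and the second phase runs in $O(n)$ time. All arrays used ($SN$, $INTSCR$, $XSCR$, $EXTSCR$, $INTQ$) have size $O(\ell) = O(n)$, giving $O(n)$ space. Combining the three cases with the shared preprocessing yields the claimed $O(n)$ time and space bound.
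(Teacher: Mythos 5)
Your proposal is correct and follows essentially the same route as the paper, which states the theorem without an explicit proof and lets it rest on the case analysis by the sign of $x$, Lemma~\ref{thm:insneg}, and Lemma~\ref{lemma-optpospos}. The only substantive content you add is the amortized push/pop analysis of the \textbf{while} loop in Algorithm~\ref{algorithm-optpospos}, which the paper leaves implicit but which is exactly the right argument for the $O(n)$ bound on the second phase.
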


\section{Sorting}
\label{sec:sort}

We now turn our attention to the {\sortingproblem} problem. Its hardness is analized considering the following derived problem. 
\begin{paragraph}
{\bf Restricted version of the \sortingproblem\ problem:} we denote by \restrictedsortingproblem\ the restricted version of the 
\sortingproblem\ problem where, for some two positive integers $k$ and $s$,
$n = 4k - 1$, the elements in $A$ are integers bounded by a polynomial function of $k$, $k-1$ elements are negative, every negative element is equal to 
$-s$, every positive element $a_i$ is such that $s/4 < a_i < s/2$, and $\sumofseq{A} = s$.
\end{paragraph}

A consequence of the
fact that sorting a sequence is similar to accommodate the positive elements in
order to create an appropriate partition into intervals leads to the following
result.

\begin{theorem}
The {\restrictedsortingproblem} problem is strongly NP-hard.
\label{thm:nphard}
\end{theorem}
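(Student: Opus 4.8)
The plan is to prove hardness by a polynomial reduction from the strongly NP-complete \Problemname{3-Partition} problem: given $3k$ positive integers $b_1, \ldots, b_{3k}$ with $\sum_i b_i = kB$ and $B/4 < b_i < B/2$ for every $i$, decide whether the $b_i$ can be partitioned into $k$ triples each summing to exactly $B$. (The bounds on the $b_i$ already force every part of a valid partition to contain exactly three elements.) From such an instance I would build an \restrictedsortingproblem\ instance by setting $s = B$, taking the $3k$ positive elements to be the $b_i$ themselves, and adjoining $k-1$ negative elements all equal to $-s$. First I would verify that this is a legitimate instance: it has $n = 3k + (k-1) = 4k-1$ elements, exactly $k-1$ of them negative and equal to $-s$, every positive element lies in $(s/4, s/2)$ by the \Problemname{3-Partition} bounds, and $\sumofseq{A} = kB - (k-1)B = B = s$; moreover all elements are integers polynomially bounded in $k$, so the construction is polynomial and preserves strong hardness.

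A single universal lower bound drives both directions: since the whole sequence is itself a contiguous subsequence, $\valueofseq{A'} \ge \sumofseq{A'} = s$ for every permutation $A'$. For the forward direction, if the \Problemname{3-Partition} instance is a yes-instance, I would place the three elements of each part consecutively, forming $k$ positive blocks each of score exactly $B$, and interleave the $k-1$ copies of $-s$ as separators between consecutive blocks. Any contiguous window spanning several blocks has score at most $B$, because each interior separator $-s$ exactly cancels an adjacent full block of score $B$; hence the value of this arrangement equals exactly $s$, matching the lower bound.

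For the converse I would argue that any permutation $A'$ achieving $\valueofseq{A'} = s$ encodes a valid 3-partition, and this is the step I expect to be the main obstacle. The $k-1$ negative elements split $A'$ into at most $k$ maximal runs of consecutive positive elements; each such run is a contiguous subsequence, hence has score at most $\valueofseq{A'} = B$. Since the runs together contain all $3k$ positive elements and their scores sum to $kB$, there must be exactly $k$ nonempty runs, each of score exactly $B$. Finally, a run of score $B$ whose elements all lie in $(B/4, B/2)$ must contain exactly three elements (four would exceed $B$, two would fall short), so the $k$ runs exhibit the desired partition into triples.

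Combining the two directions, the \Problemname{3-Partition} instance is a yes-instance if and only if the constructed \restrictedsortingproblem\ instance admits a permutation of value $s$ (equivalently, of value at most $s$, since $\valueofseq{A'} \ge s$ always). As \Problemname{3-Partition} is strongly NP-complete and the reduction is polynomial with integers bounded by a polynomial in $k$, it follows that \restrictedsortingproblem\ is strongly NP-hard.
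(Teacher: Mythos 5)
Your proof is correct and uses exactly the reduction in the paper: adjoin $k-1$ copies of $-s$ to the \Problemname{3-Partition} instance and show the instance is a yes-instance iff some permutation achieves value $s$. Your converse direction (counting the positive runs and forcing each to sum to exactly $s$, hence to have exactly three elements) is in fact spelled out more carefully than in the paper's own terse argument.
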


\begin{proof}
By reduction from the \Problemname{3-Partition} decision problem, stated as follows: given $3k$ positive integers 
$a_1, \ldots, a_{3k}$, 
all polynomially bounded in $k$, and a threshold $s$ such that $s/4 < a_i < s/2$ and $\sum_{i=1}^{3k} a_i = ks$, there exist 
$k$ disjoint triples of $a_1$ to $a_{3k}$ such that each triple sums up to exactly $s$? The \Problemname{3-Partition} problem is 
known to be NP-complete in the strong sense \cite{Garey.Johnson.1979}.

Given an instance $C$ of the \Problemname{3-Partition} problem, an instance of
the {\restrictedsortingproblem} problem is defined by an arbitrary permutation $A$ of the
multiset $C'$ obtained from $C$ by the inclusion of $k-1$ occurences of $-s$. 
A solution for the SSS instance is to choose elements of $C$ for each negative element of $C'$, which
gives a partition of $C$. Since $a_i > s/4$, for all $i \in \{ 1, \ldots, 3k \}$,
every sequence of 4 positive elements chosen from $C'$ has value greater than $s$. Thus, $C$ is a ``yes''
instance of the \Problemname{3-Partition} problem if and only if there exists a permutation $A'$ of $A$ such that $\valueofseq{A'} = s$. 
\end{proof}

We show in the sequel that
Algorithm~\ref{alg:paramsorting} is a parametrized approximation algorithm for the SSS problem. Such an algorithm builds a permutation of $A$ keeping the maximum scoring subsequence of all intervals,
except the last one, bounded by the input parameter plus the largest element of
$A$.
For the last interval, the following holds for every sequence $A$.

\begin{observation} \label{observation-last-interval}
If $N = \seq{score(I_1), score(I_2), \ldots, score(I_\ell)}$ is the sequence of
negative elements composed by intervals' scores, then
$\sumofseq{\subseq{N}{\ell-1}{\ell}} = \sumofseq{A} -
\sumofseq{\subseq{N}{0}{\ell-1}}$. Considering that $I_{\ell}$ is a subsequence
of $A$ and that $\sumofseq{\subseq{N}{0}{\ell-1}} < 0$, we conclude that
$\sumofseq{\subseq{N}{\ell-1}{\ell}}$ is a lower bound for $\valueofseq{A}$ at
least as good as $\sumofseq{A}$.
\end{observation}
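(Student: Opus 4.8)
The plan is to treat this as two essentially separate claims: the displayed equation, and the lower-bound assertion that follows it. For the equation, I would exploit the single structural fact that the intervals $I_1, \dots, I_\ell$ form the \emph{partition into intervals} of $A$, i.e. their concatenation is exactly $A$ and they occupy disjoint contiguous blocks of indices. Summing scores is additive over such a concatenation, so $\sumofseq{\subseq{N}{0}{\ell}} = \sum_{k=1}^{\ell} \sumofseq{I_k} = \sumofseq{A}$. Since by the index convention $\subseq{N}{0}{\ell}$ is the concatenation of $\subseq{N}{0}{\ell-1}$ (the scores of $I_1, \dots, I_{\ell-1}$) and the singleton $\subseq{N}{\ell-1}{\ell} = \seq{score(I_\ell)}$, additivity of $\sumofseq{\cdot}$ once more gives $\sumofseq{\subseq{N}{\ell-1}{\ell}} = \sumofseq{\subseq{N}{0}{\ell}} - \sumofseq{\subseq{N}{0}{\ell-1}} = \sumofseq{A} - \sumofseq{\subseq{N}{0}{\ell-1}}$, which is precisely the displayed identity.

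For the lower-bound part, I would first record the sign of the subtracted term. By the definition of an interval, every $I_k$ with $k < \ell$ (equivalently $\Endofinterv{k} \ne n$) satisfies $\sumofseq{I_k} < 0$; hence, whenever $\ell \ge 2$, the quantity $\sumofseq{\subseq{N}{0}{\ell-1}}$ is a sum of strictly negative terms and is therefore $< 0$ (in the degenerate case $\ell = 1$ the prefix is empty, $I_1 = A$, and the identity reduces to an equality, which I would note explicitly). Next I would observe that $I_\ell = \subseq{A}{\beginofinterv{\ell}}{n}$ is itself a subsequence of $A$, so by the definition of $\valueofseq{A}$ as the maximum score over all subsequences we get $\sumofseq{\subseq{N}{\ell-1}{\ell}} = \sumofseq{I_\ell} \le \valueofseq{A}$; this is what makes it a valid lower bound. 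Finally, $A$ is a subsequence of itself, so $\sumofseq{A} \le \valueofseq{A}$ is \emph{also} a lower bound, and combining the identity with $\sumofseq{\subseq{N}{0}{\ell-1}} \le 0$ yields $\sumofseq{\subseq{N}{\ell-1}{\ell}} = \sumofseq{A} - \sumofseq{\subseq{N}{0}{\ell-1}} \ge \sumofseq{A}$, so the new bound dominates the old one and is thus at least as good.

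I do not expect a genuine obstacle here, since the result is essentially a telescoping identity plus a sign argument. The only points requiring care are bookkeeping ones: respecting the convention that $\subseq{N}{i}{j}$ holds $N_{i+1}, \dots, N_j$ (so $\subseq{N}{\ell-1}{\ell}$ is the single score $score(I_\ell)$, not a range), and the $\ell = 1$ boundary case where the strict inequality $\sumofseq{\subseq{N}{0}{\ell-1}} < 0$ must be relaxed to equality with $0$. I would keep the interpretation of ``at least as good'' explicit throughout, namely that a \emph{larger} lower bound for the maximization quantity $\valueofseq{A}$ is the tighter and hence better one.
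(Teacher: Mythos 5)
Your proposal is correct and follows essentially the same reasoning the paper uses inline for this observation: the identity comes from additivity of scores over the partition into intervals, and the lower-bound claim comes from $I_\ell$ being a subsequence of $A$ together with the negativity of $\sumofseq{\subseq{N}{0}{\ell-1}}$. Your explicit handling of the $\ell = 1$ boundary case (where the prefix sum is $0$ rather than strictly negative) is a minor tightening of the paper's statement, not a different approach.
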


Algorithm~\ref{alg:paramsorting} gets as input, in addition to the instance $A$
(with size $n$), the parameter $L$, which depends on $M = \max \{ 0,
\max_{a \in A} a \}$. A variable $S$ is used to keep the score of the interval
being currently constructed.
Just after step~\ref{step:assignpos} is executed, it turns out that $L+M \geq S
\geq L$. On the other hand, execution of step~\ref{step:assignneg} leads to $S
\leq L$ or includes all remaining negative elements in $A'$. Moreover, if $S +
\sumofseq{Q} + \sumofseq{R} < 0$, then a new interval $I_k$ is established and
$S$ is incremented by $-\sumofseq{I_k}$ (and becomes 0). A straightforward
consequence is that $\valueofseq{A'} > L+M$ only if step~\ref{step:finalelem} is executed with positive elements of $A$, and this due to the last interval (in the sense of Observation~\ref{observation-last-interval}). This leads to the following result.

\begin{algorithm}[htb]
\caption{{\paramsorting}$(A, L)$}
\label{alg:paramsorting}
\KwIn{an array $A$ of $n \geq 0$ numbers and a parameter $L \geq M$}
\KwOut{an array $A'$ containing a permutation of $A$}
\BlankLine

Let $A'$ be an array of size $n$ \;
Let $A^- \subseteq A$ and $A^+ \subseteq A$ be the sequences of
negative and nonnegative members of $A$, respectively \label{step:splitsets} \;
\assignment{$j$}{1}{}
\assignment{$S$}{0}{}
\While {$A^- \ne \emptyset$ and $A^+ \ne \emptyset$} {
	Let $Q$ be a sequence of elements of $A^+$ such that $L \leq S  + \sumofseq{Q}
	\leq L+M$, if one exists, or $Q = A^+$ otherwise \nllabel{step:positive} \;
	Assign the elements of $Q$ to $A'[j \ldots j+|Q|-1]$ \;
	\assignment{$j$}{$j + |Q|$}{}
	\assignment{$S$}{$S + \sumofseq{Q}$}{}
	\assignment{$A^+$}{$A^+ \setminus Q$}{}	\nllabel{step:assignpos}
	Let $R$ be a minimal sequence of elements of $A^-$ such that $S +
	\sumofseq{R} < L$, if one exists, or $Q = A^-$ otherwise
	\nllabel{step:negative} \;
	\assignment{$S$}{$\max \set{0, S + \sumofseq{R}}{}$}{}
	Assign the elements of $R$ to $A'[j \ldots j+|R|-1]$ \;
	\assignment{$j$}{$j + |R|$}{}
	\assignment{$A^-$}{$A^- \setminus R$}{} \nllabel{step:assignneg} 
}
Assign the elements of $A^- \cup A^+$ to $A'[j \ldots |A^- \cup A^+|-1]$
\nllabel{step:finalelem} \; 
\Return $A'$
\BlankLine
\end{algorithm}

\begin{lemma}
Let $A$ be an instance of the SSS problem, $A'$ be the sequence returned by
the call \paramsorting$(A, L)$, for some $L \geq M$, and $N'$ be the sequence of the $\ell'$ interval scores
of $A'$. Then,
\begin{equation}
\valueofseq{A'} \leq \max \set{L+M,
\sumofseq{\subseq{{N'}}{\ell'-1}{\ell'}} = \sumofseq{A} -
\sumofseq{\subseq{{N'}}{0}{\ell'-1}}}.
\label{eq:scoreAp}
\end{equation}
Moreover, {\sc ParametrizedSorting}$(A, L)$ runs in $O(n)$ time.
\label{lem:parametrized}
\end{lemma}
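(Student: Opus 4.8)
The plan is to bound $\valueofseq{A'}$ one interval at a time. Since a maximum scoring subsequence of $A'$ is contained in a single one of its intervals, we have $\valueofseq{A'} = \max_{1 \le k \le \ell'} \valueofseq{I_k}$, so it suffices to show that every interval except the last one has value at most $L+M$, while the last contributes at most $\max\{L+M, \sumofseq{I_{\ell'}}\}$. Throughout I assume $L > 0$; the degenerate case $L = M = 0$ (no positive elements) is immediate.

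First I would isolate the loop invariant behind the whole argument: whenever a block $Q$ of positive elements is selected at step~\ref{step:positive}, the running score satisfies $S < L$ beforehand (this holds at the first iteration, where $S = 0$, and after step~\ref{step:assignneg}, where $S = \max\{0, S + \sumofseq{R}\} < L$), and afterwards $L \le S \le L+M$, unless $Q = A^+$ exhausts the positive elements, which happens only in the terminal iteration and forces the loop to exit. The bound $S \le L+M$ holds because $S$ is increased by elements each of size at most $M$ starting from a value below $L$, so the first partial sum reaching $L$ already lies in $[L, L+M]$, and all earlier partial sums in the block are smaller. Hence, inside any interval that is completed within the loop, the prefix sums peak precisely at the ends of its positive blocks, each such peak being at most $L+M$, the negative blocks only lowering the running sum.

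The key step is converting this into a bound on the interval value. By the definition of an interval, every proper prefix of $I_k$ has nonnegative score and the empty prefix has score $0$; therefore, for each position $b$ strictly inside $I_k$ the least prefix sum up to $b$ is $0$, so the best subsequence ending at $b$ equals the prefix sum at $b$. It follows that $\valueofseq{I_k}$ equals the largest prefix sum attained in $I_k$, which by the previous paragraph is at most $L+M$ for every interval closed inside the loop.

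Finally I would treat the elements assigned at step~\ref{step:finalelem}, which all belong to the last interval. If the loop stopped because $A^+$ is empty, these are negative and create no new prefix peak, so $\valueofseq{A'} \le L+M$. If instead $A^-$ is empty, the appended elements are positive and the running sum of the last interval increases monotonically to its total $\sumofseq{I_{\ell'}}$; its largest prefix sum is then at most $\max\{L+M, \sumofseq{I_{\ell'}}\}$, and the same bound holds for $\valueofseq{I_{\ell'}}$. Because $A'$ permutes $A$, the interval scores sum to $\sumofseq{A}$, which yields the telescoping identity $\sumofseq{\subseq{{N'}}{\ell'-1}{\ell'}} = \sumofseq{A} - \sumofseq{\subseq{{N'}}{0}{\ell'-1}}$ of Observation~\ref{observation-last-interval}; combining the cases gives \eqref{eq:scoreAp}. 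For the running time, the split at step~\ref{step:splitsets} costs $O(n)$, and each $Q$ and each $R$ is built greedily by scanning elements until its threshold is crossed, so the total work is proportional to the number of assigned elements, namely $n$; thus \paramsorting$(A,L)$ runs in $O(n)$ time. I expect the main obstacle to be the bookkeeping for the last interval when leftover positives must be absorbed, ensuring that its largest prefix sum is exactly $\max\{L+M, \sumofseq{I_{\ell'}}\}$ and not larger.
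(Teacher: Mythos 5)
Your proof is correct and takes essentially the same route as the paper, which only sketches this lemma in the paragraph preceding its statement: the invariant $L \leq S \leq L+M$ after each positive block, the fact that an interval's value equals its largest prefix score, and Observation~\ref{observation-last-interval} for the last interval. Your version is in fact more complete than the paper's informal justification, notably in checking that the resets of $S$ coincide with the actual interval boundaries of $A'$ and in separating the two ways the loop can terminate.
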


The key of our approximation algorithm is to provide Algorithm {\paramsorting}
with an appropriate lower bound parameter. The most
immediate one is $L = \max \set{M, \sumofseq{A}}$, which, however, does
not capture the contribution of the negative members of $A$ whose values are smaller than
$-L$ when $A$ contains at least one nonnegative element. In order to circumvent
this difficult case of Lemma~\ref{lem:parametrized}, assume that $A^*$ is an
optimum solution, with $N^*$ being the sequence of $\ell^*$ scores of the corresponding partition into intervals, 
and $OPT = \valueofseq{A^*}$. According
to~\eqref{eq:scoreAp}, we need to find a new value for $L$ such that
$\sumofseq{\subseq{{N'}}{\ell' - 1}{\ell'}} \leq L \leq OPT$, being
$I_{\ell'}$ the last interval of the sequence $A'$ returned by
{\sc ParametrizedSorting}$(A,L)$, with the purpose of having $\valueofseq{A'}
\leq OPT + M$.

\begin{lemma}
Let $x$ be a real number and
\[
b(x) = \sumofseq{A} + \sum_{a \in B_{x}} (- a - x),
\]
where $B_x = \{ a_i \in A \mid a_i < -x \}$ (note that $B_x$ is a multiset).
Then, $x \geq b(x)$ implies $b(x) \leq OPT$.
\label{lem:lb}
\end{lemma}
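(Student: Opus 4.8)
The plan is to bound $OPT$ from below by the score of the last interval of an optimal sequence, and to show that under the hypothesis $x \ge b(x)$ this score already reaches $b(x)$. So fix an optimal permutation $A^*$, with partition into intervals $I_1, \dots, I_{\ell^*}$, and recall from Observation~\ref{observation-last-interval} applied to $A^*$ that $OPT = \valueofseq{A^*} \ge \sumofseq{I_{\ell^*}}$. For bookkeeping, note that $b(x)$ is exactly the total sum of $A$ after every element below $-x$ is raised to $-x$; equivalently, straight from the definition, $b(x) = \sumofseq{A} - \sum_{a \in B_x} a - |B_x|\,x$, which is the only form of $b(x)$ the argument needs. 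I would also assume, without loss of generality, that the last element of $A^*$ is nonnegative: a maximum scoring subsequence never begins with a negative element (Fact~\ref{fact:nonnegprefsuf}), so prepending a negative element never increases the value; repeatedly moving a negative last element to the front therefore produces an optimal sequence whose last element is nonnegative, unless every element of $A$ is negative, in which case $OPT = 0 \ge b(x)$ trivially.

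The argument then splits on the size of $OPT$ relative to $x$. If $OPT \ge x$, we are done at once, since the hypothesis gives $OPT \ge x \ge b(x)$. So assume $OPT < x$. The crucial structural claim is that in this regime every element of $B_x$ (a \emph{heavy} negative, $a < -x$) can occur only as the last element of its interval. Indeed, suppose a heavy element $a$ sits at a non-final position of some interval $I_k$. By Fact~\ref{fact:prefmaxpref}, $\valueofseq{I_k}$ equals the largest prefix score of $I_k$, so every prefix score of $I_k$ is at most $\valueofseq{I_k} \le OPT < x$. Now the prefix of $I_k$ ending at $a$ is a proper prefix, hence has nonnegative score; consequently the prefix ending just before $a$ has score $\ge -a > x$, contradicting that all prefix scores are below $x$. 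Hence each interval holds at most one heavy element, at its very end; together with $a^*_n \ge 0$, the last interval $I_{\ell^*}$ contains none, and the $|B_x|$ heavy elements are the final elements of $|B_x|$ distinct non-last intervals.

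With this structure I would finish by computing $\sumofseq{I_{\ell^*}} = \sumofseq{A} - \sum_{k < \ell^*}\sumofseq{I_k}$. For each non-last interval $I_k$ let $c_k$ be the sum of its non-heavy elements. If $I_k$ ends in a heavy element, then $c_k$ is the score of the proper prefix preceding it, so $0 \le c_k < x$ by the prefix bound above; if $I_k$ has no heavy element, then $c_k = \sumofseq{I_k} < 0$. Since exactly $|B_x|$ of the non-last intervals are of the first type and the rest contribute negatively, summing gives $\sum_{k<\ell^*} c_k < |B_x|\,x$. As $\sum_{k<\ell^*}\sumofseq{I_k} = \sum_{k<\ell^*} c_k + \sum_{a \in B_x} a$, we obtain
\[
\sumofseq{I_{\ell^*}} = \sumofseq{A} - \sum_{a \in B_x} a - \sum_{k<\ell^*} c_k > \sumofseq{A} - \sum_{a \in B_x} a - |B_x|\,x = b(x),
\]
whence $OPT \ge \sumofseq{I_{\ell^*}} > b(x)$, as required.

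The main obstacle is the accounting in this last step: the naive estimate $\sum_{k<\ell^*} c_k < (\ell^*-1)x$ is too weak, since it over-counts by the heavy-free intervals, and the inequality only closes because those heavy-free non-last intervals have strictly negative score and so pull the sum below $|B_x|\,x$. Making this valid hinges entirely on the structural claim that heaviness confines each such element to the tail of an interval, so that the $c_k$ of a heavy-bearing interval is a genuine interval prefix score bounded by $x$; this is precisely where the assumption $OPT < x$, and hence $\valueofseq{I_k} < x$, is consumed.
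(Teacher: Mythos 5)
Your proof is correct and follows essentially the same route as the paper's: both lower-bound $OPT$ by the score of the last interval of an optimal permutation (Observation~\ref{observation-last-interval}) and exploit the fact that, once the threshold exceeds $OPT$, every element of $A$ smaller than the negated threshold must terminate a negative-score non-last interval, so that those intervals absorb at least $\sum_{a}(-a-\text{threshold})$ in total. The paper compresses this into a short chain of inequalities inside a proof by contradiction (passing through $B_{OPT}\supseteq B_{x}$), whereas you argue directly in the case $OPT<x$, spelling out the structural claim and the normalization (nonnegative last element) that the paper leaves implicit --- a more detailed but mathematically equivalent argument.
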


\begin{proof}
By contradiction, assume that $x \geq b(x)$ and $b(x) > OPT$. Since $x > OPT$, we get $B_x \subseteq B_{OPT}$. In 
addition, Observation~\ref{observation-last-interval} gives
\[
\begin{array}{rcl}
OPT & \geq & \sumofseq{A^*} - \sumofseq{\subseq{{N^*}}{0}{\ell^*-1}} \\
    & \geq & \sumofseq{A} + \sum_{a \in B_{OPT}} (- a - OPT) \\
    & \geq & \sumofseq{A} + \sum_{a \in B_{x}} (- a - x) \\
    & = & b(x),
\end{array}
\]
which contradicts the assumption $b(x) > OPT$. 
\end{proof}

Based on lemmata~\ref{lem:parametrized} and~\ref{lem:lb}, we define the two-phase Algorithm {\approxsorting}. Its first phase consists in determining the largest $B_L$
satisfying $L \geq M$ (Lemma~\ref{lem:parametrized}) and $L \geq b(L)$ (Lemma~\ref{lem:lb}). Set 
$L_0 = \max \set{M, \sumofseq{A}}$ and take the
elements of a decreasing sequence $P$ on the set $\set{-a \mid a \in A,
a < -L_0} \cup \{ L_0 \}$ (note that, by definition, all elements of $P$ are
distinct). Write this sequence as 
$P = \seq{p_0, p_1, \ldots, p_{|P|-1}}$, which means that $B_{p_0} = \emptyset$ and $b(p_0) = \sumofseq{A}$. 
Then, find the maximal index $i$ (in the range from 0 to $|P|-1$) such that $i = 0$ or $b(p_i) < p_{i-1}$. It is worth mentioning that we can have $b(p_i) < L_0$ when
$\sumofseq{A} < M$. 

The second phase is simply a call {\sc ParametrizedSorting}$\left(A, L = \max \{ L_0, b(p_i) \} \right)$.

\begin{theorem}
{\sc ApproxSorting} is a 2-approximation algorithm for the SSS problem and a $3/2$-approximation algorithm for the
\restrictedsortingproblem\ problem which runs in $O(n \log n)$ time.
\end{theorem}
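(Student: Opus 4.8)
The plan is to make the inequality $\valueofseq{A'}\le L+M$ the spine of the argument: by Lemma~\ref{lem:parametrized} this holds whenever the parameter $L$ handed to {\paramsorting} lies in the window $[\,\sumofseq{\subseq{{N'}}{\ell'-1}{\ell'}},\,OPT\,]$, so the whole proof reduces to showing that the first phase of {\approxsorting} places $L$ inside that window. I would first record two cheap lower bounds on $OPT=\valueofseq{A^*}$: the singleton consisting of the largest element is a subsequence, so $OPT\ge M$; and the whole permuted sequence is a subsequence of itself, so $OPT\ge\sumofseq{A}$. Hence $OPT\ge L_0=\max\{M,\sumofseq{A}\}$.

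Next I would argue $L\le OPT$. Evaluating the computation in the proof of Lemma~\ref{lem:lb} at $x=OPT$ (through Observation~\ref{observation-last-interval} applied to $A^*$) gives $OPT\ge b(OPT)$, so $OPT$ lies in the half-line $\{x:x\ge b(x)\}$ on which, $b$ being non-increasing, the bound of Lemma~\ref{lem:lb} is available. The first phase walks the breakpoints $p_0>p_1>\cdots$ of the step function $b$ and, through the rule ``maximal $i$ with $i=0$ or $b(p_i)<p_{i-1}$'', isolates the breakpoint that brackets the fixed point of $b$; this yields the largest $B_L$ (equivalently the smallest admissible parameter) still compatible with $L\ge M$ and $L\ge b(L)$. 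Lemma~\ref{lem:lb} applied at this breakpoint certifies $b(p_i)\le OPT$, and since $L_0\le OPT$ we conclude $L=\max\{L_0,b(p_i)\}\le OPT$.

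I would then bound the last interval. Because $b(L)$ charges, on top of $\sumofseq{A}$, the surplus $-a-L$ of every element $a<-L$, the condition $L\ge b(L)$ guarantees that the amount the leftover positive elements can pile up in the final interval built by {\paramsorting} is at most $L-\sumofseq{A}$; by Observation~\ref{observation-last-interval} this is exactly $\sumofseq{\subseq{{N'}}{\ell'-1}{\ell'}}=\sumofseq{A}-\sumofseq{\subseq{{N'}}{0}{\ell'-1}}\le L$. With both $L\le OPT$ and $\sumofseq{\subseq{{N'}}{\ell'-1}{\ell'}}\le L$, Lemma~\ref{lem:parametrized} collapses to $\valueofseq{A'}\le\max\{L+M,\sumofseq{\subseq{{N'}}{\ell'-1}{\ell'}}\}\le L+M\le OPT+M$. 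The general bound follows from $OPT\ge M$, giving $\valueofseq{A'}\le 2\,OPT$. For {\restrictedsortingproblem} every positive element lies in $(s/4,s/2)$, so $M<s/2$, while $OPT\ge\sumofseq{A}=s$; hence $M<OPT/2$ and $\valueofseq{A'}\le OPT+M<\frac{3}{2}\,OPT$.

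For the running time, forming $A^-,A^+$, $M$, $\sumofseq{A}$ and $L_0$ costs $\Oof{n}$; building the decreasing sequence $P$ requires a sort and costs $\Oof{n\log n}$; a single scan maintaining the running value $b(p_i)=\sumofseq{A}+\sum_{j<i}(p_j-p_i)$ and locating the index $i$ is $\Oof{n}$; and the second phase is the $\Oof{n}$ call {\paramsorting}$(A,L)$ of Lemma~\ref{lem:parametrized}. The sort dominates, yielding $\Oof{n\log n}$. The step I expect to be the real obstacle is the last-interval bound $\sumofseq{\subseq{{N'}}{\ell'-1}{\ell'}}\le L$: it demands a careful accounting of how {\paramsorting} disposes of the remaining positive elements once $A^-$ is exhausted and of matching that accounting to the definition of $b$, together with the off-by-one care needed to certify that the breakpoint selection really lands $L$ \emph{inside} $[\sumofseq{\subseq{{N'}}{\ell'-1}{\ell'}},OPT]$ rather than merely near its boundary.
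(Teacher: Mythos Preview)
Your outline is correct and mirrors the paper's proof: establish $L\le OPT$ via Lemma~\ref{lem:lb}, establish $\sumofseq{\subseq{{N'}}{\ell'-1}{\ell'}}\le L$ via $L\ge b(L)$ and the structure of {\paramsorting}, then invoke Lemma~\ref{lem:parametrized} to get $\valueofseq{A'}\le L+M\le OPT+M$, with the two ratios following from $M\le OPT$ (general) and $M<s/2\le OPT/2$ (restricted).

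The two spots you flag as obstacles are exactly where the paper supplies the missing mechanics, and both are short. For $L\le OPT$: when $i>0$ the condition $b(p_i)<p_{i-1}$ together with the fact that no element of $A$ has absolute value strictly between $p_i$ and $p_{i-1}$ gives $B_{p_i}=B_{b(p_i)}$ whenever $p_i<b(p_i)<p_{i-1}$, hence $b(b(p_i))=b(p_i)$; Lemma~\ref{lem:lb} applied with $x=b(p_i)$ (not $x=p_i$) then yields $b(p_i)\le OPT$. For the last-interval bound: the structural fact you need is that every interval $I'_k$ of $A'$ with negative score ends at the last element $a$ of some block $R$, and minimality of $R$ forces the running sum just before $a$ to be $\ge L$, so $a<-L$ (i.e.\ $a\in B_L$) and $\sumofseq{I'_k}\ge a+L$; summing over $k<\ell'$ gives $\sumofseq{\subseq{{N'}}{0}{\ell'-1}}\ge \sum_{a\in B_L}(a+L)=\sumofseq{A}-b(L)\ge \sumofseq{A}-L$, which is precisely $\sumofseq{\subseq{{N'}}{\ell'-1}{\ell'}}\le L$. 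With these two fills your sketch becomes the paper's proof verbatim.
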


\begin{proof}
First we show that $L = b(p_i) \leq OPT$ (the case $L = L_0$ is trivial). If $P = \seq{0}$, then $M = 0$ and $OPT = 0$. In this case, $L = OPT = 0$. Otherwise, there are
two subcases. If $i = 0$, then $b(p_0) = \sumofseq{A} \leq L_0 \leq OPT$. On the other hand, if $i > 0$, then, by Lemma~\ref{lem:lb}, 
it is sufficient to show that $b(p_i) < p_{i-1}$
yields $p_i \geq b(p_i)$ or $b(p_i) = b(b(p_i))$. This implication holds since if $p_i < b(p_i) < p_{i - 1}$, then $B_{p_i} = B_{b(p_i)}$.

Let $A'$ be the permutation of $A$ produced by {\sc ApproxSorting}$(A)$, with partition into intervals
$\seq{I_1', I_2', \linebreak[0] \ldots, \linebreak[0] I_{\ell'}'}$. Since each interval $I'_k$ having $\sumofseq{I'_k} < 0$ has an $a \in
B_L$ as last element, we get $\sumofseq{I'_k} \geq a+L$. It turns out that $\sumofseq{\subseq{{N'}}{0}{\ell'-1}} \geq 
\sum_{a \in B_{L}} (a + L) = score(A) - b(L) \geq score(A) - L$.
Observation~\ref{observation-last-interval} leads to
$\sumofseq{\subseq{{N'}}{\ell'-1}{\ell'}} \leq L$. Therefore,
Lemma~\ref{lem:parametrized} gives that $\valueofseq{A'} \leq L + M$.

The approximation factors stem directly from Lemma~\ref{lem:parametrized} and $M \leq \valueofseq{A}$. In special, for
the \restrictedsortingproblem\ case, the first phase of Algorithm {\approxsorting} obtains $L_0 = \max \{ \sumofseq{A} = s, M < s/2 \} = s$
and, by the definition of $B_s$, $L = s$. This leads to the approximation factor $(L+M)/L < (s + s/2)/s = 3/2$. 

Finally, the time complexity is due to the construction of the sequence $P$ (notice that the search for $p_i$ in $P$ can be easily done in linear time). 
\end{proof}

A final remark that can be made in connection with algorithm {\approxsorting} is that the approximation factor of 2 is tight. To see this, consider $x > 0$ and $x/2 < y < x$. The sequence $A$ returned by the call \paramsorting$(\seq{y,-x, y,-x, x}, x)$ is either \seq{y, y,-x, x,-x}, or \seq{y, x,-x, y,-x}, or \seq{x,-x, y, y,-x}. It follows that $2y \leq \valueofseq{A} \leq y+x$. Then, since $OPT = x$, $\frac{\valueofseq{A}}{OPT} \tendsto 2$ as $x-y \tendsto 0$.

\section{Concluding remarks}
\label{sec:conc}

We motivated two problems related to maximum scoring subsequences
of a sequence, namely the \Problemname{Insertion in a Sequence with scores
(\insertionproblem)} and \Problemname{Sorting a sequence by scores (\sortingproblem)} problems. For the ISS problem, we presented a linear time solution, and for the SSS one 
we proved its NP-hardness (in the strong sense)
 and gave a 2-approximation algorithm.

The SSS problem is also closely related to another set partitioning problem,
called {\kpartitionproblem} problem, stated as follows: given a multiset $C$ of positive integers and a positive integer $m$, find a partition of $C$ into $m$ subsets $C_0, C_1, \ldots, C_{m-1}$ such that $\max_{i \in \set{
0, 1, \ldots, m-1 }} \set{\sum_{a \in C_i} a}$ is minimized. Not surprisingly,
given an instance $(C,m)$ of {\kpartitionproblem}, an instance of the
{\sortingproblem} problem can be defined as an arbitrary permutation $A$ of the
multiset $C'$ obtained from $C$ by the inclusion of $m-1$ occurrences of the
negative integer $- {\sumofseq{C}} -1$, indicating that a solution of the
{\sortingproblem} problem for $A$ induces a solution of the {\kpartitionproblem}
problem for $(C,m)$. This problem admits a polynomial time approximation scheme
(PTAS)~\cite{Hochbaum.Shmoys.87,Kellerer.Pferschy.Pisinger.04} as well as list
scheduling heuristics producing a solution which is within a factor of $2 - 1/n$
(being $n$ the number of elements in the input multiset $C$) from the
optimal~\cite{Graham.69}. On the other hand, MAX-3-PARTITION, the optimization version of the problem used in 
the proof of Theorem~\ref{thm:nphard}, is
known to be in APX-hard~\cite{Feldmann.Foschini.12}. A natural open question is, thus, whether there exist a
polynomial time approximation algorithm with factor smaller than 2 for the SSS problem. In this regard, note that although transferring our
approximation factor from the SSS problem to the {\kpartitionproblem} problem is easy, the converse appears harder to be done,
since we do not know in advance how many intervals there should be in an optimal permutation $A'$ of $A$.

\section*{Acknowledgements}

We would like to thank Prof. Siang Wun Song for prolific
initial discussions on the maximum sum subsequence problem.

\bibliographystyle{unsrt}
\bibliography{./bibliography}
\end{document}